\documentclass[12pt,letterpaper,reqno]{amsart}
\usepackage[includeheadfoot,left=1in,right=1in,top=1in, bottom=1in]{geometry}
\usepackage[utf8]{inputenc}
\usepackage{enumerate}
\usepackage{amsmath,amsthm,amssymb,amsfonts,latexsym}
\usepackage{dsfont}
\usepackage[gen]{eurosym}
\usepackage[german,english]{babel}
\usepackage{mathtools}
\usepackage{comment,soul}
\usepackage{mathrsfs}
\usepackage{latexsym}
\usepackage{tikz}					% Grafiken zeichnen

\newcommand{\Tr}[1]{\mathrm{tr}^{(#1)}}

\DeclareMathOperator{\supp}{supp}

\DeclarePairedDelimiterX\braket[2]{\langle}{\rangle}{#1 | #2}
\DeclarePairedDelimiterX\bra[1]{\langle}{|}{#1}
\DeclarePairedDelimiterX{\braketop}[3]{\langle}{\rangle}{#1 \delimsize\vert #2 \delimsize\vert #3}
\DeclarePairedDelimiterX\ketbra[1]{\lvert}{\rvert}{#1 \rangle \langle #1}
\DeclarePairedDelimiterX\ket[1]{|}{\rangle}{#1}

\newcommand{\projectionpsi}{P_{\ket{\psi}}}
\newcommand{\projectionPsiN}{P_{\ket{\Psi_N^{1,\omega}}}}
\newcommand{\projectionvarphi}{P_{\ket{\varphi}}} %{\ketbra{\varphi}}
\newcommand{\projectionvarphiN}{P_{\ket{\varphi_N}}}%
\newcommand{\projectionvarphiNjomega}[1]{P_{\ket{\varphi_N^{#1,\omega}}}}%{\ketbra{\varphi_N^{j,\omega}}
\newcommand{\projectionvarphiNtwice}{P_{\ket{\varphi_N,\varphi_N}}}%{\projectionvarphiNjomegatwice
\newcommand{\projectionpsipsi}{P_{\ket{\psi,\psi}}}

\newcommand{\jparticlehilbertspaceAbrev}[1]{\mathcal H_{N}^{#1,\omega}}
\newcommand{\jparticlehilbertspaceSymAbrev}[1]{\mathcal H_{N,\text{s}}^{#1,\omega}}
\newcommand{\oneparticlehilbertspaceAbrev}{\mathcal H_N^{1,\omega}}

\newtheorem{theorem}{Theorem}%[section]
\newtheorem{lemma}[theorem]{Lemma}

\newtheorem{remark}[theorem]{Remark}
\newtheorem{corollary}[theorem]{Corollary}

\newtheorem{example}[theorem]{Example}
\newcommand{\be}{\begin{equation}}
\newcommand{\ee}{\end{equation}}
\newcommand{\bea}{\begin{eqnarray*}}
	\newcommand{\eea}{\end{eqnarray*}}
\newcommand{\beq}{\begin{eqnarray}}
\newcommand{\eeq}{\end{eqnarray}}

\newcommand{\tr}{\mathrm{tr}^{(1)}}
\newcommand{\Trtwo}{\mathrm{tr}^{(2)}}
\newcommand{\densitymatrix}[1]{\varrho^{(#1)}_{\Psi_N^{1,\omega}}}
\newcommand{\densitymatrixgen}[1]{\varrho^{(#1)}_{\Psi_N}}

\newcommand{\N}{\varrho^{(N)}_{\Psi_N^{1,\omega}}}

\allowdisplaybreaks

\usepackage{todonotes}

\title[Existence and absence of BEC in the interacting KL-model]{
Existence and absence of Bose--Einstein condensation in the  interacting random Kac--Luttinger model
} 

\subjclass[2010]{}

\keywords{}

\author[C.~Boccato]{Chiara Boccato}
\address{Department of Mathematics, Università di Pisa, 56127 Pisa, Italy}
\email{chiara.boccato@unipi.it}

\author[J.~Kerner]{Joachim Kerner}
\address{FernUniversit\"at in Hagen, Fakultät für Mathematik und Informatik, Lehrgebiet Analysis, 58084 Hagen, Germany}
\email{joachim.kerner@fernuni-hagen.de}

\author[M.~Pechmann]{Maximilian Pechmann}
\address{Department of Mathematics, Tennessee Technological University, Cookeville, TN 38505, USA}
\email{mpechmann@tntech.edu}

\author[W.~Spitzer]{Wolfgang Spitzer}
\address{FernUniversit\"at in Hagen, Fakultät für Mathematik und Informatik, 58084 Hagen, Germany}
\email{wolfgang.spitzer@fernuni-hagen.de}

\date{\today}

\thanks{
}

\begin{document}

\begin{abstract}
In this paper, we study  interacting bosons at zero temperature in a random and higher-dimensional continuum model introduced by Kac and Luttinger. For weak interactions we prove that there is condensation in the lowest eigenstate of the one-particle Hamiltonian (type-I BEC). For strong interactions, however, we show that condensation in a localized state cannot occur. We also prove generalized condensation, where a family of eigenstates of the one-particle Hamiltonian is macroscopically occupied as a whole. Combining these results yields a scenario where there is generalized condensation into a family of eigenstates of the one-particle Hamiltonian, but none of them is macroscopically occupied itself (type-III BEC). This proves a transition in the type of condensation. To the best of our knowledge, this is the first rigorous result in this direction for a random continuum model in higher dimensions. 
\end{abstract}

\maketitle

\section{Introduction}
\noindent 
This paper deals with Bose--Einstein condensation (BEC) in random environments and since BEC behaves very differently in random settings, we aim to contribute to a better understanding thereof. Generally speaking, we study interacting many-body quantum systems of bosonic particles, i.e., particles that are described by permutation-symmetric wavefunctions. In quantum mechanics, BEC constitutes an important example of a phase transition that is typically hard to prove mathematically. This is true, in particular, for a random setting. 

In the following, we study interacting bosonic particles in a random model originally introduced by Kac and Luttinger in the seventies~\cite{kac1973bose,kac1974bose}. The impurities in the Kac--Luttinger model (KL-model) are hard balls of a fixed radius that are randomly distributed in space according to a Poisson point process. Kac and Luttinger concluded that there exists a finite critical particle density and they also conjectured that only the one-particle ground state is macroscopically occupied, similar to the non-interacting (ideal) Bose gas in three dimensions as discussed by Einstein~\cite{EinsteinBECI,EinsteinBECII}. This conjecture became known as the Kac--Luttinger conjecture and was resolved only very recently in \cite{SznitmanKAC} based on previous results established in \cite{KERNER2020287}. Compared to the non-random setting, the difficulty lies in controlling the eigenvalues, and even more so the spectral gaps between eigenvalues due to the intricate geometry of the (random) one-particle configuration space. At this point, it is worth referring to the one-dimensional analog of the KL-model, the so-called Luttinger--Sy model~\cite{LuttSyEnergy}. Here, one randomly distributes points on the real axis and places Dirac delta potentials at these points. Due to the one-dimensional nature of this model, eigenvalue considerations become more feasible but remain complex. Regarding the non-interacting Bose gas in the Luttinger--Sy model, one could highlight that the analog of the Kac--Luttinger conjecture for this model was resolved in \cite{LenZag}, see also \cite{jaeck2010nature}; an alternative proof was given in~\cite{KERNER2020287}. 

Of course, taking into account inter-particle interactions is desirable but complicates the discussion of BEC and this is true also in the non-random setting. There, in suitable (scaling) regimes, advances in the last decades led to a variety of rigorous results; see, for example, \cite{LiebSeiringer2002, ZagLauVer,SeiRev,BBCS2019,BBCSoptimalrate,SchleinRev, BaiVogel, ZagrebnovReviewBEC} and references therein. However, since randomness leads to localization effects (Anderson localization), the transition from the non-interacting to the interacting Bose gas is much more drastic in the random setting. On the one hand, randomness enhances BEC in the non-interacting Bose gas due to the presence of Lifshitz tails (fast decay of the integrated density of states at the bottom of the spectrum); on the other hand, the condensate is expected to be spatially localized (for example, on a logarithmic scale) and this spatial localization then poses problems as soon as inter-particle interactions are implemented. Generally speaking, it is this competition between peculiar spectral effects such as Lifshitz tails (being related to relatively large spectral gaps) and (Anderson) localization that makes the study of BEC in random environments special. 

Due to what has been said above, proving absence of BEC becomes, in random settings, equally important to proving existence. As a matter of fact, a main aim of this paper is to establish some results on the absence of condensation into states that are too localized (Theorem~\ref{AbsenceofLocalizedBEC}). Similar results for the KL-model in the non-percolation regime (at positive temperature) have been established in \cite{kerner_pechmann_2023} and in \cite{kerner2021effect} for the Luttinger--Sy model. Regarding BEC in interacting Bose gases in the KL-model, one should also refer to \cite{boccato2024interacting}: in this paper, the authors use results from \cite{SznitmanKAC} in order to prove macroscopic occupation of a one-particle state (the minimizer of a Hartree-type functional) for certain two-particle interactions that are weak enough (see also Theorem~\ref{TheoremMacOcc}). This shows that the (localized) condensate of the non-interacting Bose gas is nevertheless stable against some weak enough two-particle interactions. However, comparing the results on existence and absence of BEC, one question immediately comes up: Should one expect a macroscopically occupied one-particle state in the KL-model for strong interactions? Or, to phrase it differently, what type of condensation can exist in an interacting system with strong interactions and strong localization? In this context, let us refer to \cite{SeiringerWarzel} where absence of BEC was proved for the Tonks-Girardeau gas and to \cite{kerner2021effect} where absence of BEC was established for the Luttinger--Sy model.

The goal of this paper is to approach these questions. After introducing the KL-model in Section~\ref{Section The setting} and other notions in Section~\ref{Section Preliminaries}, we establish Theorem~\ref{TheoremMacOcc} in Section~\ref{FirstSection} which shows that the one-particle ground state is also macroscopically occupied in the interacting system given the two-particle interaction is weak enough. However, since the one-particle ground state is presumably localized (it is localized in the non-percolation regime on a logarithmic scale), we derive Theorem~\ref{AbsenceofLocalizedBEC} in Section~\ref{SectionAbsence} which shows that a sufficiently localized state cannot be macroscopically occupied for strong enough two-particle interactions. In other words, the macroscopic occupation described by Theorem~\ref{TheoremMacOcc} will be lost for sufficiently strong two-particle interactions. In Section~\ref{SectionGeneralized} we then study condensation also for stronger two-particle interactions and establish Theorem~\ref{GenBECWeak} which proves generalized condensation. Hence, combining the results from Section~\ref{SectionAbsence} and Section~\ref{SectionGeneralized} then yields an example of an interacting Bose gas in the KL-model for which there is generalized condensation into a family of one-particle eigenstates but each individual state is not macroscopically occupied due to its spatial localization; in other words, there is type-III Bose--Einstein condensation (see Corollary~\ref{CorGenBEC}). Taking into account that one has type-I BEC in the non-interacting Bose gas (actually, only the one-particle ground state is macroscopically occupied), our paper establishes a transition in the nature of the condensate, by increasing the interaction strength, going from type I to type III; here, one should remark that similar results have been obtained in \cite{kerner2019bose} for the one-dimensional Luttinger--Sy model. However, to the best of our knowledge, this is the first time for which such a transition is proved for a random higher-dimensional model. We finish the paper with a discussion of our results in Section~\ref{SectionDiscussion}.
\section{The interacting random Kac--Luttinger model}\label{Section The setting}

\noindent To introduce our model, we start with a probability space $(\Omega, \mathcal{F}, \mathds{P})$ and a Poisson point process $\xi$ on $\mathds{R}^d$, $d \ge 2$, with constant intensity $\nu>0$, generating the points $\{x_m^\omega : m\in\mathds{N}\}$. In more detail, $\xi$ is a (measurable) map from this probability space into a set of point measures on $\mathds{R}^d$ such that for any Borel subset $\Lambda\subset \mathds{R}^d$ of finite Lebesgue volume $|\Lambda|$, the random variable $\Omega\ni\omega\mapsto \xi(\omega)(\Lambda)$ is Poisson distributed with mean $\nu |\Lambda|$. This means that $\mathds{P}\big(\omega\in\Omega : \xi(\omega)(\Lambda) = k\big)  = \frac{(\nu |\Lambda|)^k}{k!} \exp (-\nu |\Lambda|)$ for $k\in\mathds{N}_0$. Moreover, for any finite number $n$ of disjoint Borel subsets $\Lambda_1,\ldots,\Lambda_n$, the random variables $\omega\mapsto \xi(\omega)(\Lambda_1),\ldots,\omega\mapsto \xi(\omega)(\Lambda_n)$ are independent. We identify the random measure $\xi(\omega)$ with its support, namely the points $\{x_m^\omega : m\in\mathds{N}\}$.

Let $B_r(x)$ denote the closed ball of fixed radius $r>0$ centered at $x \in \mathds{R}^d$. 
Furthermore, for each $N \in \mathds{N}$, we define the box

\begin{equation*}
\Lambda_N \coloneqq \left(-L_N/2,+L_N/2 \right)^d \subset \mathds{R}^d \quad \text{where} \quad L_N := \rho^{-1/d} N^{1/d}
\end{equation*}
and $\rho >0$. Thus, for each $N \in \mathds N$,
\begin{equation*}
\Lambda_N^{\omega} \coloneqq \Lambda_N \backslash \bigcup_{m \in \mathds N} B_r(x_m^{\omega})\ , \quad \omega \in \Omega\ , 
\end{equation*}
is a random (open) set in $\mathds R^d$, which we refer to as the \emph{vacancy set}. 
\begin{remark}
    The vacancy consists $\mathds P$-almost surely and for all $N \in \mathds N$ of only finitely  many components \cite[Proposition 4.1]{meester1996continuum}. In this paper, we shall always refer to such a \textit{typical} $\omega \in \Omega$. Along the way, we will also subsume more properties that hold almost surely when referring to a (typical) $\omega \in \Omega$. 
\end{remark}
For each $N \in \mathds{N}$, we denote the number of components of the vacancy set $\Lambda_N^{\omega}$ by $K_N^{\omega} \in \mathds{N}_0$ (with the understanding that $K_N^{\omega} = 0$ if $\Lambda_N^{\omega} = \emptyset$). Defining the index set $\mathcal{K}_N^{\omega} := \{1, \ldots, K_N^{\omega} \}$ if $K_N^{\omega} \ge 1$ (and $\mathcal{K}_N^{\omega} := \emptyset$ if $K_N^{\omega} = 0$), we label the components by $k \in \mathcal{K}_N^{\omega}$. Therefore, if $K_N^{\omega} \ge 1$, each component of $\Lambda_N^{\omega}$ is denoted by $\Lambda_N^{k,\omega}$ where $k \in \mathcal{K}_N^{\omega}$, and the collection $\{\Lambda_N^{k,\omega}\}_{k \in \mathcal{K}_N^{\omega}}$ forms a partition of $\Lambda_N^{\omega}$. We remark that $\mathds P$-almost surely and for all but finitely many $N$, $\mathcal K_N^{\omega} \neq \emptyset$, see for example \cite[Chapter 4, Theorem 4.6]{sznitman1998brownian}. Therefore, in the following, we will always assume that $\omega \in \Omega$ and $N \in \mathds{N}$ are such that $\mathcal K_N^{\omega} \neq \emptyset$.

It is also important to recall that one distinguishes two regimes in the KL-model: the percolation and the non-percolation regime~\cite{sznitman1998brownian}. In the percolation regime, the intensity $\nu$ is smaller than some critical value and there exists an unbounded component of $\mathds{R}^d\backslash \bigcup_{m \in \mathds{N}} B_r(x_m^{\omega})$; in the non-percolation regime, the intensity is larger than this critical value and all components of $\mathds{R}^d\backslash \bigcup_{m \in \mathds{N}} B_r(x_m^{\omega})$ are bounded.

To simplify the notation, we write $\jparticlehilbertspaceSymAbrev{1} \coloneq \jparticlehilbertspaceAbrev{1} \coloneq L^2(\Lambda_N^{\omega})$ and, for $j \in \{2,\ldots,N\}$,
\begin{equation*}
\jparticlehilbertspaceAbrev{j} \coloneq L^2((\Lambda^{\omega}_N)^j) \quad \text{ and } \quad \jparticlehilbertspaceSymAbrev{j} \coloneq L^2_{\text{s}}((\Lambda^{\omega}_N)^j) \ .
\end{equation*}
Here, the index $\text{s}$ refers to the symmetric subspace of $L^2((\Lambda_N^{\omega})^{j})$, $j \in \{2,\ldots,N\}$.

Our system is then defined by the random, self-adjoint $N$-particle Hamiltonian
\begin{equation} \label{N particle Hamiltonian}
H_{N}^{\omega} \coloneqq -\sum_{j=1}^{N} \Delta_{j} + W_N^{(N)} \ ,
\end{equation}
which acts on the $N$-particle Hilbert space $\jparticlehilbertspaceSymAbrev{N}$, subject to Dirichlet boundary conditions. This Schrödinger operator is defined through its associated quadratic form with form domain $\mathcal{D}[H_N^{\omega}] = H_0^1((\Lambda_N^{\omega})^{N}) \cap \jparticlehilbertspaceSymAbrev{N}$, where $H_0^1(\cdot)$ is the usual Sobolev space.

In particular, the self-adjoint operator $-\sum_{j=1}^{N} \Delta_{j}$ is a standard lift of $-\Delta$ to $\jparticlehilbertspaceAbrev{N}$, which, after an appropriate restriction also defines an operator on $\jparticlehilbertspaceSymAbrev{N}$, and $-\Delta$ is the (self-adjoint) Dirichlet Laplacian on $\oneparticlehilbertspaceAbrev$. Furthermore, we introduce
\begin{equation*}
 W_N^{(N)} : \jparticlehilbertspaceSymAbrev{N} \to \jparticlehilbertspaceSymAbrev{N} \ , \ \Psi(x_1,\dots,x_N) \mapsto \sum_{1 \leq i < j \leq N}w_N(x_i-x_j) \Psi(x_1,\dots,x_N) \ ,
\end{equation*}
where $w_N \in (L^1 \cap L^{\infty})(\mathds{R}^d)$ is a non-negative, bounded, and even function.  We also introduce the corresponding operator
\begin{equation*}
    W_N^{(2)} : \jparticlehilbertspaceSymAbrev{2} \to \jparticlehilbertspaceSymAbrev{2} \ , \  \Psi(x_1,x_2) \mapsto w_N(x_1-x_2) \Psi(x_1,x_2) \ ,
\end{equation*}
on $\jparticlehilbertspaceSymAbrev{2}$.

From a physical point of view, the Hamiltonian \eqref{N particle Hamiltonian} 
describes a system of $N$ bosons confined to the box $\Lambda_N$. Each particle is described by the one-particle operator  
\begin{equation*}
    -\Delta+V^\omega\ ,
\end{equation*}
where the random external potential can be informally written as
\begin{equation*}
 V^\omega(x) \coloneqq 
\sum_{m \in \mathds{N}} \infty \cdot 1_{B_r(x_m^\omega)}(x) \ ,
\end{equation*}
and $-\Delta$ is the Dirichlet Laplacian on $L^2(\Lambda_N)$. The repulsive two-body interaction between the bosons is described by the 
non-negative, bounded function $w_N:\mathds{R}^d \rightarrow \mathds{R}_+$. 
Also, the Hamiltonian $H_N^{\omega}$ represents the total energy of the system: 
the operator $-\sum_{j=1}^{N} \Delta_j$ corresponds to the kinetic energy, 
while the operator $W_N^{(N)}$ accounts for the potential energy arising from the two-body interparticle interactions. 
The constant $\rho>0$ appearing in the definition of $\Lambda_N$ represents 
the particle density; consequently, taking the limit $N \to \infty$ corresponds 
to the thermodynamic limit.

\begin{remark}
In this paper, we abbreviate $L^p$-norms by writing, for instance, $\|\cdot\|_{2}$ instead of $\|\cdot\|_{L^{2}(\Lambda_N^{\omega})}$. We also omit specifications when writing inner products. In addition, we adopt the following notation for the asymptotic behavior of sequences: Whenever $(a_N)_{N \in \mathds{N}}$ is a non-negative and $(b_N)_{N \in \mathds{N}}$ a positive sequence, we write $a_N \ll b_N$ if and only if $\lim_{N \to \infty} a_N/b_N = 0$, and $a_N \sim b_N$ if and only if there exist constants $c, C > 0$ such that $c a_N \leq b_N \leq C a_N$ for all but finitely many $N \in \mathds{N}$. We use the notation $a_N \lesssim b_N$ if and only if $a_N \ll b_N$ or $a_N \sim b_N$.
\end{remark}

\section{Preliminaries} \label{Section Preliminaries}
\noindent We start by recalling well-known properties of linear operators on Hilbert spaces; for more details we refer the reader, for example, to \cite{weidmann2012linear}. 

Let $\mathscr H$ be a complex, separable Hilbert space. Let $A$ be a linear operator on $\mathscr H$ with (dense) domain $\mathcal D(A)$. Then $\mathcal N(A) \coloneq \{f \in \mathcal D(A) : Af = 0\}$ is the nullspace of $A$ and $\mathcal R(A) \coloneq \{Af : f \in \mathcal D(A)\}$ is the image of~$A$. Furthermore, a bounded operator $P$ on $\mathscr H$ is called a projection on $\mathscr H$ if $P^2 = P$. A projection $P$ on $\mathscr H$ is called an orthogonal projection on $\mathscr H$ onto $\mathcal R(P)$ if $\mathcal R(P) \perp \mathcal N(P)$. Orthogonal projections are uniquely defined by their image, and for any closed subspace $\mathcal G$ of $\mathscr H$ there exists an orthogonal projection $P_{\mathcal G}$ with $\mathcal R(P_{\mathcal G}) = \mathcal G$. Orthogonal projections on Hilbert spaces are self-adjoint and positive. The identity operator $\mathds 1_{\mathscr H}$ is the orthogonal projection on $\mathscr H$ onto $\mathscr H$.

Let $\mathcal F,\mathcal G$ be closed subspaces of $\mathscr H$: If and only if $\mathcal F \subset \mathcal G$, the operator $P_{\mathcal G} - P_{\mathcal F}$ is an orthogonal projection on $\mathscr H$. Similarly, if and only if $\mathcal F \perp \mathcal G$, $P_{\mathcal F} + P_{\mathcal G}$ is a orthogonal projection on $\mathscr H$.

Let $A,B$ be two linear operators on $\mathscr H$. If $A,B$ are two Hilbert--Schmidt operators\footnote{A Hilbert-Schmidt operator is a compact operator with square-summable singular values.}, or $A$ is a bounded operator and $B$ is a trace class operator\footnote{A trace-class operator is a compact operator with summable singular values.}, then $AB$ is a trace-class operator and we have
\begin{equation*}\label{cyclic property trace}
    \mathrm{tr}_{\mathscr H}(AB) = \mathrm{tr}_{\mathscr H}(BA) \ .
\end{equation*}

Let $\varrho$ be a positive (and thus self-adjoint) trace-class operator on $\mathscr H$.
Then $\varrho$ has a unique square root $\varrho^{1/2}$ such that $\varrho = \varrho^{1/2} \varrho^{1/2}$. Moreover, $\varrho^{1/2}$ is then a Hilbert--Schmidt operator, and a positive (and thus self-adjoint) operator, on $\mathscr H$. Next, let $A$ be a bounded operator on $\mathscr H$. Then $A \varrho^{1/2}$ is also a Hilbert--Schmidt operator.  

Therefore, $A\varrho^{1/2} \varrho^{1/2}$ is a trace-class operator and
\begin{equation*}\label{trace cyclic property}
    \mathrm{tr}_{\mathscr H}(A \varrho) = \mathrm{tr}_{\mathscr H}(A \varrho^{1/2} \varrho^{1/2}) = \mathrm{tr}_{\mathscr H}(\varrho^{1/2} A \varrho^{1/2}) \ .
\end{equation*}

If $A$ is bounded and positive, then
\begin{equation}\label{trace positive operator density matrix is positive}
    \mathrm{tr}_{\mathscr H}(A \rho) = \mathrm{tr}_{\mathscr H}(\rho^{1/2} A \rho^{1/2}) = \sum\limits_{j \in \mathds N} \braketop{\rho^{1/2} \psi_j}{A}{\rho^{1/2} \psi_j}\ge 0 \ ,
\end{equation}
where $(\psi_j)_{j \in \mathds N}$ is any orthonormal basis of $\mathscr H$.
In particular, for any closed subspaces $\mathcal F, \mathcal G$ of $\mathscr H$ with $\mathcal F \subset \mathcal G$,
\begin{equation} \label{inequatliy tr F G subspaces}
    \mathrm{tr}_{\mathscr H} (P_{\mathcal F} \varrho) \le \mathrm{tr}_{\mathscr H} (P_{\mathcal G} \varrho) \ ,
\end{equation}
since $P_{\mathcal G} - P_{\mathcal F}$ is an orthogonal projection and hence a positive, bounded operator.

\begin{remark}
For a normalized $\psi \in \mathscr H$ we write $\projectionpsi$ for the orthogonal projection on $\mathscr H$ onto $\mathrm{span}(\psi)=\mathds{C}\cdot\psi$, $\projectionpsipsi$ for the orthogonal projection on $\mathscr H \otimes \mathscr H$ onto $\mathrm{span}(\psi \otimes \psi)=\mathds{C}\cdot(\psi \otimes \psi)$, etc. Also, throughout this work, we use $\braketop{\psi}{A}{\psi}$ rather than $(\psi,A\psi)$ for a self-adjoint operator $A$.
\end{remark}

The lowest eigenvalue $E_{\text{QM},N}^{1,\omega}$ of the Hamiltonian $H_N^{\omega}$, hence the $N$-particle ground state energy, is given by 
\begin{equation*}
E_{\text{QM},N}^{1,\omega} \coloneqq \inf \big\{ \braketop{\Psi}{ H_N^{\omega}}{\Psi} : \Psi \in \mathcal{D}[H_N^{\omega}], \|\Psi \|_{2} = 1 \big\} \ ,
\end{equation*}
where we understand $\braketop{\Psi}{H_N^{\omega}}{\Psi}$ in the form sense.
There exists a (symmetric) minimizer $\Psi_N^{1,\omega}$, which we also refer to as a ground state of $H_N^{\omega}$, satisfying
\begin{equation*}
 \braketop{\Psi_N^{1,\omega}}{H_N^{\omega}}{\Psi_N^{1,\omega}} = E_{\text{QM},N}^{1,\omega} \ .
\end{equation*}

The $N$-particle density matrix
\begin{equation*}
 \N \coloneq \projectionPsiN
\end{equation*}
is a positive trace-class operator on $\mathcal{H}^{N,\omega}_{N}$ with trace one,
\begin{equation*}
 \Tr{N}_{\jparticlehilbertspaceAbrev{N}}(\densitymatrix{N}) = \sum\limits_{j \in \mathds N} \braketop{\Psi_j}{\densitymatrix{N}}{\Psi_j} = \big\|\Psi_N^{1,\omega}\big\|_2^2 =1\ ,
\end{equation*}
where $(\Psi_j)_{j \in \mathds N}$ is an arbitrary orthonormal basis of $\mathcal{H}^{N,\omega}_{N}$ (of course, a similar construction works for any $\Psi_N \in \mathcal{H}^{N,\omega}_{N}$). The corresponding one-particle density matrix $\densitymatrix{1}$ is then defined as a positive, trace-class operator on \( \oneparticlehilbertspaceAbrev \) such that
\begin{equation*}
\braketop{\varphi}{\densitymatrix{1}}{\psi} = \sum\limits_{j \in \mathds N} \braketop{\varphi, \Phi_j}{\densitymatrix{N}}{\psi, \Phi_j}
\end{equation*}
for all $\varphi, \psi \in \oneparticlehilbertspaceAbrev$, where $(\Phi_j)_{j \in \mathds N}$ is an orthonormal basis of $\mathcal{H}^{N-1,\omega}_{N}$ and where we used the notation $\braketop{\varphi, \Phi_j}{\densitymatrix{N}}{\psi, \Phi_j} = (\varphi \otimes \Phi_j, \densitymatrix{N} (\psi \otimes \Phi_j))$. 

This one-particle density matrix is normalized, meaning its trace satisfies
\begin{equation*}
 \tr_{\jparticlehilbertspaceAbrev{1}}(\densitymatrix{1}) = \sum\limits_{j \in \mathds N} \braketop{\varphi_j}{\densitymatrix{1}}{\varphi_j}= 1 \ ,
\end{equation*}
where $(\varphi_j)_{j\in\mathds N}$ is an arbitrary orthonormal basis of $\oneparticlehilbertspaceAbrev$. The number of particles occupying a one-particle state $\varphi \in \oneparticlehilbertspaceAbrev$, given the $N$-particle system is in the $N$-particle ground state $\Psi_N^{1,\omega}$, is then given by
\begin{equation*}
n_{\Psi_N^{1,\omega}}^{\varphi} := N \, \tr_{\jparticlehilbertspaceAbrev{1}}\left(\projectionvarphi \, \densitymatrix{1}  \right) \ .
\end{equation*}
The prefactor $N$ in front of the trace ensures that the total number of particles occupying states $(\varphi_j)_{j\in\mathds{N}}$ of an orthonormal basis of $ \oneparticlehilbertspaceAbrev$  is equal to the total number of particles $N$.

Similarly, the corresponding two-particle density matrix $\densitymatrix{2}$ is defined as a positive, trace-class operator on \( \mathcal{H}^{2,\omega}_{N} \) such that
\begin{equation*}
\braketop{\Phi}{\densitymatrix{2}}{\Psi} = \sum\limits_{j \in \mathds N} \braketop{\Phi, \Xi_j}{\densitymatrix{N}}{\Psi, \Xi_j}
\end{equation*}
for all $\Phi, \Psi \in  \mathcal{H}^{2,\omega}_{N}$, where $(\Xi_j)_{j \in \mathds N}$ is a orthonormal basis for $\mathcal{H}^{N-2,\omega}_{N}$. Furthermore,
\begin{equation*}
 \Trtwo_{\jparticlehilbertspaceAbrev{2}}(\densitymatrix{2}) = \sum\limits_{j \in \mathds N} \braketop{\Phi_j}{\densitymatrix{2}}{\Phi_j}= 1 \ ,
\end{equation*}
where $(\Phi_j)_{j\in\mathds N}$ is an arbitrary orthonormal basis of $\mathcal{H}^{2,\omega}_{N}$. In a recursive manner, one also defines the $j$-particle density matrix $\densitymatrix{j}$ on $\mathcal{H}^{j,\omega}_{N}$ in this way. 

Lastly, we denote the (canonical) eigenfunctions of the (self-adjoint) Dirichlet Laplacian~$-\Delta$ on $\oneparticlehilbertspaceAbrev$ by $\varphi_N^{j,\omega}$, $j \in \mathds N$, and the corresponding eigenvalues by $e_N^{j,\omega}$, $j \in \mathds N$, arranged in increasing order and repeated according to their multiplicities ($\varphi_N^{1,\omega}$ will be referred to as the one-particle ground state). Note that the term \textit{canonical} refers to eigenfunctions that are supported on one component of the vacancy set only; in other words, for each component of the vacancy set, one picks an orthonormal basis of eigenfunctions and extends them by zero to all of the vacancy set.

In the following, we use the notation, $j \in \mathds N$,
\begin{equation} \label{occupation number}
n_N^{j,\omega} \coloneq n_{\Psi_N^{1,\omega}}^{\varphi_N^{j,\omega}} = N \, \tr_{\oneparticlehilbertspaceAbrev}\left(\projectionvarphiNjomega{j} \, \densitymatrix{1} \right) \ ,
\end{equation}
and call $n_N^{j,\omega}$ the occupation number of the eigenstate $\varphi_N^{j,\omega}$. As mentioned above, $\sum_{j=1}^\infty n_N^{j,\omega} =N$. Loosely speaking, a state $\varphi^{j,\omega}_N$ is called macroscopically occupied if the occupation numbers $n_N^{j,\omega}$ are of order $N$ as the system size $L_N$ tends to infinity as $N \rightarrow \infty$ (the precise formulations are stated in the theorems below).

\section{A first result: Condensation into the one-particle ground state}\label{FirstSection} 
\noindent The by now proved Kac--Luttinger conjecture asserts that in the non-interacting Bose gas in the KL-model, and in some suitable probabilistic sense, only the one-particle ground state is macroscopically occupied (at positive temperature). The aim of this section is to prove that this macroscopic occupation persists in the interacting Bose gas (at zero temperature), assuming the interaction strength is not too strong. 
\begin{theorem}[BEC in the one-particle ground state]\label{TheoremMacOcc} Assume the system is in the state $\N$, where $\Psi_N^{1,\omega}$ is a ground state of $H_{N}^{\omega}$ defined in  \eqref{N particle Hamiltonian}. Let $n_N^{1,\omega}$ be the occupation number of $\varphi_N^{1,\omega}$.  Then, the following holds:
\begin{enumerate}[$(\mathrm{i})$]
\item If
\begin{equation}\label{Assumption interaction strength type I BEC KLM}
\|w_N\|_{\infty} \ll \dfrac{1}{N(\ln N)^{1+(2/d)} } \quad \text{ or } \quad \|w_N\|_1 \ll \dfrac{1}{N(\ln N)^{2/d} } \ ,
\end{equation}
then for all $\zeta > 0$ one has 
\begin{equation*}
  \lim\limits_{N \to \infty} \mathds P \left(\left| \dfrac{n_N^{1,\omega}}{N} - 1 \right| < \zeta \right) = 1\ .
 \end{equation*}
In other words, there is complete BEC in probability into $\varphi_N^{1,\omega}$.
 
 \item[$(\mathrm{ii})$] For any $0 < c < 1$ and $\epsilon > 0$ there exists a constant $\kappa > 0$ such that if
 \begin{equation*}
  \|w_N\|_{\infty} \le \dfrac{\kappa}{N(\ln N)^{1+(2/d)} } \quad \text{ or } \quad \|w_N\|_1 \le \dfrac{\kappa}{N(\ln N)^{2/d} }
 \end{equation*}
for all but finitely many $N \in \mathds N$, one has
\begin{equation*}
  \liminf\limits_{N \to \infty} \mathds P \left( \dfrac{n_N^{1,\omega}}{N} \ge c \right) \ge 1 - \epsilon\ .
 \end{equation*}
 Hence, there is (possibly non-complete) BEC with probability almost one into $\varphi_N^{1,\omega}$.
 \end{enumerate}
\end{theorem}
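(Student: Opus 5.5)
We argue variationally, comparing the interacting ground state energy with the one-particle spectrum. Take the product trial state $(\varphi_N^{1,\omega})^{\otimes N}$, which is admissible since it lies in $H_0^1((\Lambda_N^\omega)^N)$ and is symmetric. This gives the upper bound
\begin{equation*}
E_{\text{QM},N}^{1,\omega} \le N e_N^{1,\omega} + \frac{N(N-1)}{2}\, \big\langle \varphi_N^{1,\omega}\otimes\varphi_N^{1,\omega}, W_N^{(2)}\, \varphi_N^{1,\omega}\otimes\varphi_N^{1,\omega}\big\rangle .
\end{equation*}
The interaction term is at most $\frac{N^2}{2}\|w_N\|_\infty$. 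Alternatively, by Young's inequality it is at most $\frac{N^2}{2}\|w_N\|_1\|\varphi_N^{1,\omega}\|_4^4 \le \frac{N^2}{2}\|w_N\|_1 \|\varphi_N^{1,\omega}\|_\infty^2$. For the $L^\infty$ norm we use the standard heat-kernel/ultracontractivity bound for Dirichlet eigenfunctions, $\|\varphi\|_\infty^2 \lesssim (e_N^{1,\omega})^{d/2}$. At the bottom of the spectrum $e_N^{1,\omega}\sim (\ln N)^{-2/d}$ by Lifshitz-tail asymptotics, so this factor is bounded.

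For the lower bound we drop the nonnegative interaction $W_N^{(N)}\ge0$. Since $\sum_j \Delta_j$ reduces to one-particle quantities,
\begin{equation*}
E_{\text{QM},N}^{1,\omega} \ge N\,\mathrm{tr}\big(-\Delta\, \densitymatrix{1}\big) = \sum_j e_N^{j,\omega} n_N^{j,\omega} \ge N e_N^{1,\omega} + (N-n_N^{1,\omega})(e_N^{2,\omega}-e_N^{1,\omega}).
\end{equation*}
Combining the two bounds gives
\begin{equation*}
1-\frac{n_N^{1,\omega}}{N} \le \frac{N\,\|w_N\|_\infty/2}{e_N^{2,\omega}-e_N^{1,\omega}},
\end{equation*}
and the analogous bound with $\|w_N\|_1$ (times the $L^\infty$ factor) in place of $\|w_N\|_\infty$.

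The key probabilistic input is a lower bound on the spectral gap from the resolution of the Kac--Luttinger conjecture \cite{SznitmanKAC,KERNER2020287}: the rescaled gap $(\ln N)^{1+2/d}(e_N^{2,\omega}-e_N^{1,\omega})$ is tight away from zero. That is, for every $\epsilon>0$ there is $\delta>0$ with
\begin{equation*}
\liminf_N \mathds P\big((e_N^{2,\omega}-e_N^{1,\omega}) \ge \delta (\ln N)^{-1-2/d}\big)\ge 1-\epsilon .
\end{equation*}
We also need, in probability, $e_N^{1,\omega}\lesssim (\ln N)^{-2/d}$ to control the $L^\infty$ norm.

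\textbf{Case $\|w_N\|_\infty$.} The right-hand side above is bounded by $N\|w_N\|_\infty(\ln N)^{1+2/d}/\delta$. Under (i) this tends to zero on the good event, so $\mathds P(|n_N^{1,\omega}/N-1|\ge\zeta)\le \epsilon+o(1)$ for every $\epsilon$, which proves (i). Under (ii) the bound is $\kappa/(2\delta)$; choosing $\kappa \le 2\delta(1-c)$ gives (ii).

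\textbf{Case $\|w_N\|_1$.} The extra factor $(\ln N)^{-1}$ coming from $\|\varphi_N^{1,\omega}\|_\infty^2 \lesssim (e_N^{1,\omega})^{d/2}\sim(\ln N)^{-1}$ exactly compensates the difference in the hypotheses. We need the constant in the ultracontractive bound to be uniform in the domain. This holds because Dirichlet heat kernels are dominated by the free heat kernel: $\varphi = e^{t}e^{t\Delta}\varphi$ at $t=1/e$ gives $\|\varphi\|_\infty \le e^{te}(4\pi t)^{-d/4}$. Optimizing over $t$, with $t\sim 1/e_N^{1,\omega}$, yields $\|\varphi\|_\infty^2\lesssim (e_N^{1,\omega})^{d/2}$.

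\textbf{Main obstacle.} The main difficulty is the probabilistic gap estimate at the precise scale $(\ln N)^{-1-2/d}$ with tightness. We take it from \cite{SznitmanKAC}, which establishes that the rescaled first two eigenvalues have nondegenerate limiting laws, so that the gap is positive with probability approaching one. Adapting the cited fluctuation result, which is stated for the KL-model in boxes of volume $N/\rho$, to our setting is the step we expect to require the most care.
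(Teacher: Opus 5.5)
Your proposal is correct and follows the paper's proof step by step. It uses the same ingredients: the product trial state $(\varphi_N^{1,\omega})^{\otimes N}$ with the interaction bounded by $\min\{\|w_N\|_\infty, C^2\|w_N\|_1 (e_N^{1,\omega})^{d/2}\}$, the kinetic lower bound leading to $1-n_N^{1,\omega}/N \le N\min\{\cdots\}/(e_N^{2,\omega}-e_N^{1,\omega})$, and then $e_N^{1,\omega}(\ln N)^{2/d}\to c_0$ together with the gap estimate $\lim_{\sigma\to 0}\liminf_N \mathds P\big(e_N^{2,\omega}-e_N^{1,\omega}\ge \sigma(\ln N)^{-(1+2/d)}\big)=1$ from \cite[Theorem 6.1]{SznitmanKAC}, which the paper applies directly in these boxes, so no real adaptation is needed. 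The only difference is that you derive the domain-uniform bound $\|\varphi_N^{1,\omega}\|_\infty^2\lesssim (e_N^{1,\omega})^{d/2}$ by heat-kernel domination (you mean $\varphi=e^{te_N^{1,\omega}}e^{t\Delta}\varphi$), whereas the paper cites \cite[Lemma 1.1]{SznitmanKAC}.
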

\begin{proof}
 On the one hand,
 \begin{align}\label{UpperBoundEnergy}
  E_{\text{QM},N}^{1,\omega} & \le \braketop{\varphi_N^{1,\omega}, \ldots, \varphi_N^{1,\omega}}{H_N^{\omega}}{\varphi_N^{1,\omega}, \ldots, \varphi_N^{1,\omega}}\nonumber\\
  & =  N \braketop{\varphi_N^{1,\omega}}{- \Delta}{\varphi_N^{1,\omega}} + \dfrac{N(N-1)}{2}  \braketop{\varphi_N^{1,\omega}, \varphi_N^{1,\omega}}{W_N^{(2)}}{\varphi_N^{1,\omega}, \varphi_N^{1,\omega}}\nonumber \\
  & = N e_N^{1,\omega} + \frac{N(N-1)}{2} \int\limits_{\Lambda_N^{\omega}} \int\limits_{\Lambda_N^{\omega}} w_N(x-y) |\varphi_N^{1,\omega}(x)|^2 |\varphi_N^{1,\omega}(y)|^2 \, \mathrm{d}x \mathrm{d}y\nonumber\\
  & \le N e_N^{1,\omega} + \dfrac{N(N-1)}{2} \min\Big\{ \|w_N\|_{\infty}, C^2 \|w_N\|_1 (e_N^{1,\omega})^{d/2}\Big\} \ ,
 \end{align}
 where we use the estimate $\|\varphi_N^{1,\omega}\|_{\infty} \leq C (e_N^{1,\omega})^{d/4}$ with a suitable constant $C >0$; see, for example, \cite[Lemma 1.1]{SznitmanKAC}.
 On the other hand, since $\Psi_N^{1,\omega}$ is symmetric,
 \begin{align*}
  E^{1,\omega}_{\text{QM},N} &= \Tr{N}_{\jparticlehilbertspaceSymAbrev{N}}\left(H_N^{\omega} \, \densitymatrix{N}\right) \geq \Tr{N}_{\jparticlehilbertspaceAbrev{N}}\left(\sum_{j=1}^{N}(-\Delta_j) \, \densitymatrix{N} \right) =
  N \tr_{\oneparticlehilbertspaceAbrev} \left( (-\Delta) \, \densitymatrix{1} \right) \\
  & = N \tr_{\oneparticlehilbertspaceAbrev} \left( \sum\limits_{j \in \mathds N} e_N^{j,\omega} \projectionvarphiNjomega{j} \, \densitymatrix{1} \right) \ge N \left( \dfrac{n_N^{1,\omega}}{N} e_N^{1,\omega} + \left( 1 - \dfrac{n_N^{1,\omega}}{N} \right) e_N^{2,\omega} \right) \ .
 \end{align*}
 Note that the above calculations with traces involve unbounded operators; for a justification thereof see, for example, \cite[Chapter 3]{LSBook}. Combining the upper and lower bound, we obtain
 \begin{align}\label{ImpEq}
  1 - \dfrac{n_N^{1,\omega}}{N} \le \dfrac{N \min\{ \|w_N\|_{\infty}, C^2 \|w_N\|_1 (e_N^{1,\omega})^{d/2} \}}{e_N^{2,\omega} - e_N^{1,\omega}} \ ,
 \end{align}
 whenever $\omega \in \Omega$ is such that $e_N^{2,\omega} - e_N^{1,\omega} > 0$. 

Now, we prove (i): According to \cite[Chapter 4, Theorem 4.6]{sznitman1998brownian} one knows that, $\mathds P$-almost surely, $e_N^{1,\omega}(\ln N)^{2/d}$ converges to some fixed constant $c_0 > 0$. This implies that $\mathds P \left (e^{1,\omega}_N \leq 2c_0(\ln N)^{-2/d}\right)$ $\longrightarrow 1$ as $N \rightarrow \infty$. Furthermore, \cite[Theorem 6.1]{SznitmanKAC} implies that
 \begin{equation*}
  \lim\limits_{\sigma \to 0} \liminf\limits_{N \to \infty} \mathds P \left (e_N^{2,\omega} - e_N^{1,\omega} \ge \sigma (\ln N)^{-(1+2/d)} \right) = 1 \ .
 \end{equation*}
Now, fix some $\zeta,\epsilon > 0$ and choose $\sigma > 0$ so small such that
\[
\liminf\limits_{N \to \infty} \mathds P \left (e_N^{2,\omega} - e_N^{1,\omega} \ge \sigma (\ln N)^{-(1+2/d)} \right) > 1-\frac{\epsilon}{2}\ .
\]
Then, since $\min\{\|w_N\|_{\infty}, C^2 (2c_0)^{d/2}\|w_N\|_1 (\ln N)^{-1}\} \ll N^{-1} (\ln N)^{-(1+2/d)}$, \eqref{ImpEq} yields, for $N$ large enough,
 \begin{equation*}\begin{split}
  \mathds P \left(\left| \dfrac{n_N^{1,\omega}}{N} - 1 \right| < \zeta \right) &\geq  \mathds P \left (\{e_N^{2,\omega} - e_N^{1,\omega} \ge \sigma (\ln N)^{-(1+2/d)}\} \cap \{e^{1,\omega}_N \leq 2c_0(\ln N)^{-2/d}\}\right)\\
  & \geq  \mathds P \left (e_N^{2,\omega} - e_N^{1,\omega} \ge \sigma (\ln N)^{-(1+2/d)} \right)+\mathds P \left (e^{1,\omega}_N \leq 2c_0(\ln N)^{-2/d}\right)-1 \\
& \geq 1-\frac{\epsilon}{2}- \frac{\epsilon}{2}\\ 
  &= 1- \epsilon\ .
  \end{split}
 \end{equation*}
This readily implies
 \begin{equation*}
  \lim\limits_{N \to \infty} \mathds P \left(\left| \dfrac{n_N^{1,\omega}}{N} - 1 \right| < \zeta \right) = 1
 \end{equation*}
 for all $\zeta > 0$.
 
Consider (ii): Fix arbitrary $0 < c < 1$ and $0 < \epsilon < 1$. We then choose $\sigma > 0$ so small that $\liminf\limits_{N \to \infty} \mathds P \left (e_N^{2,\omega} - e_N^{1,\omega} \ge \sigma (\ln N)^{-(1+2/d)} \right) > 1-\frac{\epsilon}{2}$. We now use \eqref{ImpEq}, and choosing $\kappa > 0$ sufficiently small, we obtain, for $N$ large enough,
 \begin{equation*}\begin{split}
\mathds P \left( \dfrac{n_N^{1,\omega}}{N} \ge c \right)  &\geq  \mathds P \left (\{e_N^{2,\omega} - e_N^{1,\omega} \ge \sigma (\ln N)^{-(1+2/d)}\} \cap \{e^{1,\omega}_N \leq 2c_0(\ln N)^{-2/d}\}\right)\\
&\geq  \mathds P \left (e_N^{2,\omega} - e_N^{1,\omega} \ge \sigma (\ln N)^{-(1+2/d)} \right)+\mathds P \left (e^{1,\omega}_N \leq 2c_0(\ln N)^{-2/d}\right)-1 \\
  & \geq 1-\frac{\epsilon}{2}- \frac{\epsilon}{2}\\ 
  &= 1- \epsilon\ .
  \end{split}
 \end{equation*}
From this the statement follows. 
\end{proof}

Theorem~\ref{TheoremMacOcc} is similar to Theorem 4.2 in \cite{boccato2024interacting}. However, there the minimizer of a Hartree-type functional was used as the one-particle state, rather than the ground state of the Dirichlet Laplacian considered here.~But Theorem 4.2(ii) in \cite{boccato2024interacting} is stronger than the corresponding result presented here, since \cite{boccato2024interacting} establishes complete Bose--Einstein condensation (BEC) with probability almost one, whereas we can only prove non-complete BEC with probability almost one. However, since we want to establish a transition in the type of Bose--Einstein condensation, it is pivotal to include Theorem~\ref{TheoremMacOcc} in this paper. Also, it suggests that the physical properties of the system begin to change when the interaction strength is such that $\min\left\{\|w_N\|_{\infty},\, \|w_N\|_1 (\ln N)^{-1}\right\}$ is of order $ N^{-1} (\ln N)^{-(1+2/d)}$.
\section{Auxiliary results}
\noindent In this section we establish some auxiliary results that are used later to prove our main results in the absence of BEC; our results are inspired by \cite{M07}.
\begin{lemma} \label{Lemma 3}
Let $\varphi_N \in \oneparticlehilbertspaceAbrev$ be a normalized one-particle state and $\Psi_N \in \jparticlehilbertspaceSymAbrev{N}$ a normalized $N$-particle state.  If
\begin{equation}\label{EqProof}
\lim\limits_{N \to \infty}  \Trtwo_{\jparticlehilbertspaceAbrev{2}} \left( \projectionvarphiNtwice \densitymatrixgen{2} \right) = 0\ ,
\end{equation} 
then
\begin{equation*}
\lim\limits_{N \to \infty} \tr_{\oneparticlehilbertspaceAbrev} \left( \projectionvarphiN \densitymatrixgen{1} \right) = 0\ .
\end{equation*}
\end{lemma}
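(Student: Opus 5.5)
The plan is to compare the one- and two-particle occupations of $\varphi_N$ through the second moment of a number operator. Set
\[
a_N \coloneq \tr_{\oneparticlehilbertspaceAbrev}\big(\projectionvarphiN \densitymatrixgen{1}\big), \qquad b_N \coloneq \Trtwo_{\jparticlehilbertspaceAbrev{2}}\big(\projectionvarphiNtwice \densitymatrixgen{2}\big).
\]
I aim to prove the inequality $a_N^2 \le b_N + 1/N$. The statement then follows immediately from the hypothesis $b_N \to 0$.

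\emph{Step 1: express $a_N$ and $b_N$ as expectations.} For each $j\in\{1,\dots,N\}$ let $P_j$ denote $\projectionvarphiN$ acting on the $j$-th variable of $\mathcal H_N^{N,\omega}$. I will use the definition of the reduced density matrices together with the permutation symmetry of $\Psi_N$. This gives $\braketop{\Psi_N}{P_j}{\Psi_N}=a_N$ for every $j$. Since $\projectionvarphiNtwice=\projectionvarphiN\otimes\projectionvarphiN$, it also gives $\braketop{\Psi_N}{P_iP_j}{\Psi_N}=b_N$ for every $i\neq j$. Concretely, one expands the partial traces in an orthonormal basis of $\mathcal H_N^{N-1,\omega}$, respectively $\mathcal H_N^{N-2,\omega}$, and uses that the symmetric $\Psi_N$ is invariant under permuting variables. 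This is routine bookkeeping.

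\emph{Step 2: second moment bound.} Define the bounded self-adjoint operator $\mathcal N_\varphi \coloneq \sum_{j=1}^N P_j$. Since $P_j^2=P_j$, Step 1 gives
\[
\braketop{\Psi_N}{\mathcal N_\varphi}{\Psi_N}=N a_N, \qquad \braketop{\Psi_N}{\mathcal N_\varphi^2}{\Psi_N}=N a_N + N(N-1) b_N.
\]
The Cauchy--Schwarz inequality (equivalently, nonnegativity of the variance) gives $\braketop{\Psi_N}{\mathcal N_\varphi^2}{\Psi_N}=\|\mathcal N_\varphi\Psi_N\|_2^2\ge \braketop{\Psi_N}{\mathcal N_\varphi}{\Psi_N}^2$, using $\|\Psi_N\|_2=1$. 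Hence $N^2a_N^2\le Na_N+N(N-1)b_N$.

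\emph{Step 3: conclusion.} Dividing by $N^2$ and using $a_N\le 1$ and $(N-1)/N \le 1$ yields $a_N^2\le b_N+1/N$. Letting $N\to\infty$ and using \eqref{EqProof} gives $a_N\to0$.

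I do not expect a serious obstacle here. The only point requiring care is Step 1, namely the identification of the partial traces with the expectations of $P_j$ and $P_iP_j$. Because all operators involved are bounded projections, there are no domain issues.
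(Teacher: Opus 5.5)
Your argument is correct, but it follows a different route from the paper.

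\textbf{The paper's route.} It expands $\Psi_N$ in a product basis with $\varphi_N^{1}=\varphi_N$. It then uses inclusion--exclusion to discard all coefficients in which $\varphi_N$ occupies two or more of the first $J$ slots. Each such block is bounded by $b_N$, since $\langle\varphi_N,\dots,\varphi_N|\varrho^{(j)}_{\Psi_N}|\varphi_N,\dots,\varphi_N\rangle\le b_N$ for $j\ge2$. Finally, it uses symmetry to fit $J$ disjoint copies of the remaining quantity into the normalization $\|\Psi_N\|_2^2=1$. This amounts to $a_N\le 1/J+C_J\,b_N$, hence $\limsup_N a_N\le 1/J$ for every $J$.

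\textbf{Your route.} Your variance argument rests on the same mechanism: pair occupation controls the overlap of the events ``particle $i$ is in $\varphi_N$.'' You apply it to all $N$ particles at once through $\mathcal N_\varphi$ and Cauchy--Schwarz. This is shorter and basis-free, and it never touches the higher reduced density matrices $\varrho^{(j)}_{\Psi_N}$ with $j\ge3$. It also gives the explicit non-asymptotic estimate $a_N\le\sqrt{b_N+1/N}$ instead of a purely qualitative limit statement. As a bonus, it yields Corollary~\ref{Lemma 4} with the explicit constant $\widetilde c\ge c^2$.

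\textbf{What the paper's version buys.} It only ever uses a fixed finite number of particle slots. The cost is the additional limit $J\to\infty$ and the combinatorial bookkeeping.
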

\begin{proof} Suppose \eqref{EqProof} holds. First we show that for any $2 \le J \in \mathds N$, one has
\begin{align}\label{limsup J statement}
  \limsup\limits_{N \to \infty}\ \braketop{\varphi_N}{\densitymatrixgen{1}}{\varphi_N} 
  =  \limsup\limits_{N \to \infty} \sum\limits_{j_2=2}^{\infty} \ldots \sum\limits_{j_J=2}^{\infty} \sum\limits_{j_{J+1}=1}^{\infty} \ldots \sum\limits_{j_N=1}^{\infty} | \braket{\varphi_N, \varphi_N^{j_2}, \ldots, \varphi_N^{j_N}}{\Psi_N}|^2 \ ,
\end{align}
where $(\varphi^{j}_N)_{j \in \mathds{N}}$ is an orthonormal basis of $\oneparticlehilbertspaceAbrev$ such that $\varphi^{1}_N=\varphi_N$. Since $\Psi_N$ is symmetric, we conclude
\begin{align*}
  &\limsup\limits_{N \to \infty} \sum\limits_{j_2=2}^{\infty} \ldots \sum\limits_{j_J=2}^{\infty} \sum\limits_{j_{J+1}=1}^{\infty} \ldots \sum\limits_{j_N=1}^{\infty} | \braket{\varphi_N, \varphi_N^{j_2}, \ldots, \varphi_N^{j_N}}{\Psi_N}|^2 \\
 &= \limsup\limits_{N \to \infty} \Bigg( \sum\limits_{j_2=1}^{\infty} \ldots \sum\limits_{j_N=1}^{\infty} | \langle\varphi_N, \varphi_N^{j_2}, \ldots, \varphi_N^{j_N}|{\Psi_N}\rangle|^2 \\
 & -\sum\limits_{j_{J+1}=1}^{\infty} \ldots \sum\limits_{j_N=1}^{\infty} | \langle\varphi_N, \ldots,\varphi_N,\varphi_N^{j_{J+1}}, \ldots, \varphi_N^{j_N}|{\Psi_N}\rangle|^2 \\
 & - \binom{J-1}{1} \sum\limits_{j_J=2}^{\infty} \sum\limits_{j_{J+1}=1}^{\infty} \ldots \sum\limits_{j_N=1}^{\infty} | \langle\varphi_N, \dots, \varphi_N, \varphi_N^{j_J}, \varphi_N^{j_{J+1}},\ldots, \varphi_N^{j_N}|{\Psi_N}\rangle|^2\\
 & - \binom{J-1}{2} \sum\limits_{j_{J-1}=2}^{\infty} \sum\limits_{j_{J}=2}^{\infty} \sum\limits_{j_{J+1}=1}^\infty\ldots \sum\limits_{j_N=1}^{\infty} | \langle\varphi_N, \dots, \varphi_N, \varphi_N^{j_{J-1}}, \varphi_N^{j_{J}},\varphi_N^{j_{J+1}},\ldots, \varphi_N^{j_N}|{\Psi_N}\rangle|^2 -  \ldots  \Bigg)
 \\
 &= \limsup\limits_{N \to \infty} \Big( \braketop{\varphi_N}{\densitymatrixgen{1}}{\varphi_N} - \braketop{\varphi_N, \ldots, \varphi_N}{\densitymatrixgen{j}}{\varphi_N, \ldots, \varphi_N} 
 \\
 &- \binom{J-1}{1} \braketop{\varphi_N, \dots, \varphi_N}{\densitymatrixgen{J-1}}{\varphi_N, \dots, \varphi_N}  \, - \ldots  \Big) \\
 &= \limsup\limits_{N \to \infty}\ \braketop{\varphi_N}{\densitymatrixgen{1}}{\varphi_N} \ .
 \end{align*}
 In the last step we used that, for $j \geq 2$, 
 \begin{align*}
  \braketop{\varphi_N, \ldots,\varphi_N}{\densitymatrixgen{j}}{\varphi_N, \ldots,\varphi_N}  \leq \braketop{\varphi_N, \varphi_N}{\densitymatrixgen{2}}{\varphi_N, \varphi_N},
 \end{align*}
 and hence all terms above (of which there are finitely many) except $\braketop{\varphi_N}{\densitymatrixgen{1}}{\varphi_N}$ converge, by assumption, to zero in the limit $N \to \infty$. This finishes the proof of \eqref{limsup J statement}. 
 
 Now, assume by contradiction that  $\limsup_{N \to \infty} \tr_{\oneparticlehilbertspaceAbrev} \left( \projectionvarphiN \densitymatrixgen{1}) \right) = c$ for a constant $c > 0$. \eqref{limsup J statement} implies, for any $2 \le J \in \mathds N$, using that $\Psi_N$ is symmetric for all $N \in \mathds N$,
 \begin{align*}
  1 & = \lim\limits_{N \to \infty} \Tr{N}_{\jparticlehilbertspaceAbrev{N}} \left( \densitymatrixgen{N} \right) = \lim\limits_{N \to \infty} \sum\limits_{j_1=1}^{\infty} \ldots \sum\limits_{j_N=1}^{\infty} | \langle\varphi_N^{j_1}, \ldots, \varphi_N^{j_N}|{\Psi_N}\rangle|^2 \\
  & \ge \limsup\limits_{N \to \infty} \Bigg( J \sum\limits_{j_2=2}^{\infty} \ldots \sum\limits_{j_J=2}^{\infty} \sum\limits_{j_{J+1}=1}^{\infty} \ldots \sum\limits_{j_N=1}^{\infty} | \braket{\varphi_N, \varphi_N^{j_2}, \ldots, \varphi_N^{j_N}}{\Psi_N}|^2 \Bigg) \\
  & = J \limsup\limits_{N \to \infty}\ \braketop{\varphi_N}{\densitymatrixgen{1}}{\varphi_N} = Jc\ .
 \end{align*}
Since $J$ was arbitrary, this gives the result.

\end{proof}

Lemma~\ref{Lemma 3} immediately implies the following statement. 

\begin{corollary}\label{Lemma 4}
Let $\varphi_N \in \oneparticlehilbertspaceAbrev$ be a normalized one-particle state and $\Psi_N \in \jparticlehilbertspaceSymAbrev{N}$ a normalized $N$-particle state. If
\begin{equation*}
\limsup\limits_{N \to \infty} \tr_{\oneparticlehilbertspaceAbrev} \left( \projectionvarphiN \densitymatrixgen{1} \right) = c\ ,
\end{equation*}
for a constant $c > 0$, then
\begin{equation*}
\limsup\limits_{N \to \infty}  \Trtwo_{\jparticlehilbertspaceAbrev{2}} \left( \projectionvarphiNtwice \densitymatrixgen{2} \right) = \widetilde c\ ,
\end{equation*}
for a constant $\widetilde c > 0$.
\end{corollary}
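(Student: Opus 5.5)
This will be the contrapositive of Lemma~\ref{Lemma 3}, applied along subsequences. Set $a_N := \tr_{\oneparticlehilbertspaceAbrev}(\projectionvarphiN \densitymatrixgen{1})$ and $b_N := \Trtwo_{\jparticlehilbertspaceAbrev{2}}(\projectionvarphiNtwice \densitymatrixgen{2})$. Both quantities lie in $[0,1]$, since the density matrices are positive, of trace one, and the projections have norm one (see \eqref{trace positive operator density matrix is positive}). Hence $\widetilde c := \limsup_{N\to\infty} b_N$ exists and satisfies $0 \le \widetilde c \le 1$. It therefore only remains to show $\widetilde c > 0$.

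Suppose, for contradiction, that $\widetilde c = 0$. Because $b_N \ge 0$, this means $\lim_{N\to\infty} b_N = 0$, which is exactly assumption \eqref{EqProof} of Lemma~\ref{Lemma 3}. The lemma then gives $\lim_{N\to\infty} a_N = 0$, so in particular $\limsup_{N\to\infty} a_N = 0$. This contradicts the hypothesis $\limsup_{N\to\infty} a_N = c > 0$, and therefore $\widetilde c > 0$.

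The only point that needs care is that Lemma~\ref{Lemma 3} is stated for the whole sequence $(\varphi_N,\Psi_N)_{N\in\mathds N}$. In our case the hypothesis $\widetilde c=0$ concerns the whole sequence, so the lemma applies directly and no subsequence argument is needed. The one thing to check is that the lemma's conclusion is used exactly as stated, namely as convergence of $a_N$ to zero along the full sequence. I do not expect any genuine obstacle; the corollary is purely logical once Lemma~\ref{Lemma 3} is available.
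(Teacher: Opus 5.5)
Your argument is correct and matches the paper, which presents the corollary as an immediate consequence of Lemma~\ref{Lemma 3}, i.e.\ its contrapositive. Your observation that $0\le b_N\le 1$ makes $\widetilde c$ finite and turns $\widetilde c=0$ into $\lim_{N\to\infty} b_N=0$, which is all that is needed. The mention of subsequences in your first sentence is superfluous, as you note yourself.
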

In the next statement, we give a lower bound to the number of balls in $\Lambda_N$ that are free of impurities. This will be useful later in the proof of Theorem~\ref{AbsenceofLocalizedBEC}. 
\begin{lemma}[Balls free of Poisson points]\label{Lemma almost sure volume of vacancy set}
Let an arbitrary intensity $\nu > 0$ be given and fix some $r > 0$. Suppose that the constant $c_1 > 0$ is such that $0< 2 C(d) c_1 <\nu^{-1}$ with $C(d)$ being the volume of the unit ball in $\mathds R^d$. Furthermore, let $(m_N^{(1)})_{N \in \mathds N}$ and $(m_N^{(2)})_{N \in \mathds N}$ be two sequences with the following properties:
\begin{enumerate}
\item[]
\item[$ (\mathrm{i})$] $m_N^{(1)} \ge 1$ and $m_N^{(2)} \ge 0$ for all $N \in \mathds N$\ ,
\item[]
\item[$ (\mathrm{ii})$] $m_N^{(1)} \ll N^{1-2C(d) c_1\nu }(2(c_1 \ln N)^{1/d} +2r + m_N^{(2)})^{-d} (\ln N)^{-1}$\ ,
\item[]
\item[$(\mathrm{iii})$] $N^{1-2 C(d) c_1\nu }(2(c_1 \ln N)^{1/d} +2r + m_N^{(2)})^{-d} \gg  (\ln N)^{1+\eta}$ for some $\eta > 0$\ .
\item[]
\end{enumerate}

Then, $\mathds P$-almost surely for all but finitely many $N \in \mathds N$, at least $m_N^{(1)}$ many balls of radius $(c_1 \ln N)^{1/d} + r$ within $\Lambda_N$ are free of Poisson points. Furthermore, these balls have pairwise at least a distance of $m_N^{(2)}$ to each other.
\end{lemma}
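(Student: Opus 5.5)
Writing $R_N := (c_1\ln N)^{1/d}+r$, the plan is a grid construction combined with independence of the Poisson process on disjoint sets and a Borel--Cantelli argument. Tile $\Lambda_N$ by disjoint cubes of side length $\ell_N := 2(c_1\ln N)^{1/d}+2r+m_N^{(2)} = 2R_N + m_N^{(2)}$, and let $M_N$ be the number of such cubes fully contained in $\Lambda_N$. Then
\[
M_N \sim \frac{L_N^d}{\ell_N^d} = \frac{N}{\rho\,\ell_N^d}\ .
\]
In each cube place a ball of radius $R_N$ centered at the cube's center. These balls lie in pairwise disjoint cubes, so any two of them are at distance at least $\ell_N - 2R_N = m_N^{(2)}$. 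This gives the separation claim for whatever subfamily turns out to be free of Poisson points. Disjointness also makes the events
\[
A_i := \{\text{the $i$-th ball contains no Poisson point}\}
\]
independent, each with probability $p_N = \exp(-\nu C(d) R_N^d)$.

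Next I estimate $p_N$. Since $R_N^d = \big((c_1\ln N)^{1/d}+r\big)^d \le 2c_1\ln N$ for $N$ large (the additive $r$ is lower order), we get $p_N \ge N^{-2C(d)c_1\nu}$. This is presumably the origin of the factor $2$ in the exponent. The number $X_N$ of empty balls is therefore Binomial$(M_N,p_N)$ with mean
\[
\mu_N := M_N p_N \gtrsim N^{1-2C(d)c_1\nu}\,\ell_N^{-d}\ .
\]
By (iii), $\mu_N \gg (\ln N)^{1+\eta}$. By (ii), $m_N^{(1)} \ll \mu_N/\ln N$, and in particular $m_N^{(1)} \le \mu_N/2$ for large $N$. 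The assumption $2C(d)c_1\nu<1$ ensures the exponent is positive, although (iii) already encodes what is needed.

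The remaining step is a lower-tail bound. The Chernoff bound gives
\[
\mathds P(X_N < \mu_N/2) \le \exp(-\mu_N/8) \le \exp\big(-(\ln N)^{1+\eta}\big)
\]
for large $N$, and this is summable in $N$ since $(\ln N)^{1+\eta} \ge 2\ln N$ eventually. Borel--Cantelli then shows that $\mathds P$-almost surely, for all but finitely many $N$, we have $X_N \ge \mu_N/2 \ge m_N^{(1)}$, which proves the lemma.

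The main obstacle is mostly bookkeeping. One has to check $R_N^d \le 2c_1\ln N$ uniformly for large $N$ and verify $M_N \gtrsim N/\ell_N^d$. The latter needs $\ell_N \ll L_N$, which follows from (iii) since otherwise the left-hand side there would be bounded. One also has to confirm that the slack in (ii) and (iii) really absorbs the constant $\rho$ and the boundary losses in the tiling. I do not expect any genuine probabilistic difficulty beyond the Chernoff and Borel--Cantelli step.
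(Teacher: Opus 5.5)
Your proposal is correct and follows the paper's route: a grid of balls of radius $(c_1\ln N)^{1/d}+r$ with spacing $\ell_N=2(c_1\ln N)^{1/d}+2r+m_N^{(2)}$, independence of the emptiness events, the bound $p_N\ge N^{-2C(d)c_1\nu}$, a binomial lower-tail estimate, and Borel--Cantelli. The only difference is the tail estimate. The paper bounds the binomial sum crudely via $\binom{\mathbf N}{i}\le \mathbf N^{m_N^{(1)}}$, which needs $m_N^{(1)}\ln N\ll \mathbf N p_N$ and is where the factor $(\ln N)^{-1}$ in (ii) gets used. Your Chernoff bound only needs $m_N^{(1)}\le \mu_N/2$, so it shows that this logarithmic factor in (ii) is not actually needed.
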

\begin{proof}
Let us introduce some abbreviations and define the length
\[
\widetilde g\coloneqq  2(c_1 \ln N)^{1/d} +2r + m_N^{(2)}\,,
\]
the lattice points
\begin{equation*}
   \widetilde G_{r,N} \coloneqq  \ \Big\{- \lfloor \tfrac12 L_N/\widetilde g\rfloor + 1, 
  \ldots, -1+\lfloor \tfrac12 L_N/\widetilde g\rfloor \Big\} \subset \mathds Z\,,
 \end{equation*}
and the centers
\begin{align}\label{Definition of JdrN}
G_{d,r,N} \coloneqq \left\{ \widetilde g\cdot (z_1, \ldots, z_d) \mbox{ with } z_i\in \widetilde G_{r,N} \mbox{ for all } i \right\} \subset \Lambda_N = \left(-\tfrac12 L_N,\tfrac12 L_N\right)^d
 \end{align}
with $L_N = \rho^{-1/d} N^{1/d}$ as above. Then we place $\mathbf{N}\coloneqq \big(2 \lfloor \tfrac12 L_N/\widetilde g\rfloor-2\big)^d \le (L_N/\widetilde{g})^d$ many balls of radius $\widetilde r\coloneqq (c_1 \ln N)^{1/d} + r \leq \widetilde g/2$ with centers at the points in $G_{d,r,N}$. These balls are all within $\Lambda_N$ and have a distance of at least $m_N^{(2)}$ to each other. The volume of such a ball is equal to $C(d) \widetilde r^d$ with the known constant $C(d)$. Below we will need the simple bound 
\begin{align}\label{upper bound in r}
\widetilde r^d  = c_1 \ln N \Big(1+ r (c_1 \ln N)^{-1/d}\Big)^d < 2 c_1 \, \ln N
\end{align}
that holds for $N$ large enough.  

Let $A_N$ be the event that less than $m_N^{(1)}$ of these balls are free of Poisson points. Our assumptions, in particular (ii), imply that $m_N^{(1)}\ll \mathbf{N}$, and one has
\begin{align*}
 \mathds{P}(A_N)&=\sum_{i=0}^{m_N^{(1)}-1} {{\mathbf N}\choose i} \exp\big(-i \nu C(d) \widetilde r^d\big) \cdot \Big[1-\exp\big(-\nu C(d) \widetilde r^d\big)\Big]^{\mathbf{N}-i}
\\
&\le \sum_{i=0}^{m_N^{(1)}-1} {{\mathbf N}\choose i} \cdot 1 \cdot \Big[1-\exp\big(-\nu c_2 \ln N \big)\Big]^{\mathbf{N}-i}\,,
\end{align*}
where we used \eqref{upper bound in r} for large enough $N$ and set $c_2\coloneqq 2 c_1 \cdot C(d)$. For each term in this sum we have the rough bounds 
\begin{align*} {{\mathbf N}\choose i} &\le \mathbf{N}^{m_N^{(1)}} \le \left(\frac{N}{\rho \widetilde g^d}\right)^{m_N^{(1)}}
\\
 \left(1-\exp\big(-\nu c_2 \ln N \big)\right)^{\mathbf{N}-i}&=\big(1-N^{-\nu c_2}\big)^{\mathbf{N}-i} \le \big(1-N^{-\nu c_2}\big)^{\mathbf{N}-m_N^{(1)}} \le \big(1-N^{-\nu c_2}\big)^{\tfrac12 \frac{N}{\rho \widetilde g^d}}\,,
\end{align*}
where we used $\mathbf{N}-m_N^{(1)} \ge \tfrac12 N/(\rho \widetilde g^d)$ for large $N$. As a result, the probability of $A_N$ can be estimated as (using $m_N^{(1)} \le (N/(\rho \widetilde g^d))^{ m_N^{(1)}}$ and $\ln(1-x)\le -x$ for $x\in(0,1)$),
\begin{align*}
    \mathds{P}(A_N)&\le m_N^{(1)} \cdot \left(\frac{N}{\rho \widetilde g^d}\right)^{ m_N^{(1)}}\cdot \big(1-N^{-\nu c_2}\big)^{\frac{N}{2\rho \widetilde g^d}}
    \\&\le \left(\frac{N}{\rho \widetilde g^d}\right)^{ 2 m_N^{(1)}} \cdot \exp\Big(-N^{-\nu c_2} \cdot \frac{N}{2\rho \widetilde g^d}\Big)
    \\
    &= \exp\left(2 m_N^{(1)} \ln\Big(\frac{N}{\rho \widetilde g^d}\Big) - \frac{N^{1-\nu c_2}}{2\rho\widetilde g^d}\right)
    \\
    &\le \exp\big(-C (\ln N)^{1+\eta}\big)  \le N^{-2}
    \end{align*}
for large $N$ and a suitable constant $C > 0$; here, we also used assumptions (ii) and (iii).

Then, the sum $\sum_{N\ge1} \mathds{P}(A_N)$ is finite, and the Lemma of Borel--Cantelli tells us that the event $\limsup A_N \coloneqq \bigcap_{N\ge 1}\bigcup_{k\ge N} A_k$ -- which is the set of points in $\Omega$ that are in infinitely many $A_N$'s -- has probability 0. The complementary event has probability 1 and this is the event that eventually at least $m_N^{(1)}$ balls are free of Poisson points.
This finishes the proof.
\end{proof}

\section{Main results I: Absence of BEC in localized states}\label{SectionAbsence}

\noindent In this section we want to establish that, for strong enough two-particle interactions, there cannot be condensation into a sufficiently localized state. In other words, if there is type-I BEC at all (see the discussion after Theorem~\ref{GenBECWeak}), then the corresponding macroscopically occupied states must be spread over larger domains or fragmented over a number of disconnected components of the vacancy set $\Lambda_N^\omega$ that converges to infinity. In this paper, we cannot exclude these possibilities.

In order to formulate our main result of this section, assume that the $N$-particle system is in the $N$-particle state $\Psi_N^{1,\omega}$ (meaning we are at zero temperature); furthermore, $\varphi_N \in \oneparticlehilbertspaceAbrev$ shall be a one-particle state with (essential) support $\supp(\varphi_N)$.

\begin{theorem}[Absence of localized BEC]\label{AbsenceofLocalizedBEC} Let $\varphi_N \in \oneparticlehilbertspaceAbrev$. Let $m_N^{(1)},m_N^{(2)}$ be sequences as in Lemma~\ref{Lemma almost sure volume of vacancy set} such that, in addition, $m_N^{(1)}$ goes to infinity. Furthermore, assume that the two-particle interaction $w_N$ is such that, with $c_1$ and $G_{d,r,N}$ as in \eqref{Definition of JdrN},
\begin{enumerate}
\item[$(\mathrm{i})$] $\inf \Big\{w_N(x-y) : x,y \in  \supp(\varphi_N)\Big\} \gg \dfrac{1}{N (\ln N)^{2/d}}$\ ,
    \item[$(\mathrm{ii})$]  $\inf \Big\{w_N(x-y) : x,y \in  \supp(\varphi_N)  \Big\} \gg \dfrac{1}{m_N^{(1)}}\min\{\|w_N\|_{\infty}, \|w_N\|_1 (\ln N)^{-1}\}$\ ,
        \item[$(\mathrm{iii})$] $
 \min\{\|w_N\|_{\infty}, \|w_N\|_1 (\ln N)^{-1}\} \gg \\
 \quad \dfrac{1}{m_N^{(1)}} \sum\limits_{x,y \in G_{d,r,N}: x \neq y} \sup\{ w_N(\widetilde x- \widetilde y) : \widetilde x \in B_{(c_1 \ln N)^{1/d}}(x), \widetilde y \in B_{(c_1 \ln N)^{1/d}}(y) \}$ \ .
\end{enumerate}
Then, $\mathds P$-almost surely, $\varphi_N \in \oneparticlehilbertspaceAbrev$ cannot be macroscopically occupied. That is, $\mathds P$-almost surely we have $$\lim\limits_{N \to \infty} \dfrac{n_{\Psi_N^{1,\omega}}^{\varphi_N}}{N} = 0\ .$$

\end{theorem}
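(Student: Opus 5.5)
The proof will be by contradiction and compares energies. Suppose that, on a set of positive probability, $\limsup_{N\to\infty} n_{\Psi_N^{1,\omega}}^{\varphi_N}/N = c>0$ along a subsequence. By Corollary~\ref{Lemma 4}, the two-particle occupation $\Trtwo(P_{\ket{\varphi_N,\varphi_N}}\varrho^{(2)}_{\Psi_N^{1,\omega}})$ then has $\limsup$ equal to some $\widetilde c>0$. This is enough to bound the interaction energy of the ground state from below. Since $w_N\ge 0$ and $W_N^{(2)}\ge w_{\min}\, P_{\ket{\varphi_N,\varphi_N}}$ in the sense of \eqref{trace positive operator density matrix is positive}, where $w_{\min}:=\inf\{w_N(x-y): x,y\in\supp(\varphi_N)\}$, we obtain
\begin{equation*}
E_{\text{QM},N}^{1,\omega} \ge \frac{N(N-1)}{2}\Trtwo\big(W_N^{(2)}\varrho^{(2)}_{\Psi_N^{1,\omega}}\big) \ge \frac{N(N-1)}{2}\, w_{\min}\, \widetilde c
\end{equation*}
along the subsequence. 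I expect this step to need some care: I would restrict $W_N^{(2)}$ to the range of $P_{\ket{\varphi_N,\varphi_N}}$ using $\braketop{\varphi_N,\varphi_N}{W_N^{(2)}}{\varphi_N,\varphi_N}\ge w_{\min}$ together with a Cauchy--Schwarz argument. A cleaner route is to note that the operator $W_N^{(2)}-w_{\min}1_{\supp\varphi_N\times\supp\varphi_N}$ is nonnegative and that $1_{\supp\times\supp}\ge P_{\ket{\varphi_N,\varphi_N}}$.

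For the upper bound I will build a trial state from Lemma~\ref{Lemma almost sure volume of vacancy set}. Almost surely, for large $N$, there are $m_N^{(1)}$ balls of radius $(c_1\ln N)^{1/d}+r$ with centers in $G_{d,r,N}$ that are free of Poisson points. The concentric balls $B_{(c_1\ln N)^{1/d}}(x)$ therefore lie in the vacancy set. On each such ball I take the normalized Dirichlet ground state $\phi_x$, whose energy is $\lambda_1 (c_1\ln N)^{-2/d}$, so that $\|\phi_x\|_\infty^2\lesssim (\ln N)^{-1}$. I then distribute the $N$ particles as evenly as possible, about $N/m_N^{(1)}$ per ball, in the symmetrized product state $\mathrm{Sym}\bigotimes_x \phi_x^{\otimes n_x}$. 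Because the supports are disjoint, this state is normalized, and its energy is
\begin{equation*}
N\lambda_1(c_1\ln N)^{-2/d} + \sum_x \frac{n_x^2}{2}\min\{\|w_N\|_\infty, C\|w_N\|_1(\ln N)^{-1}\} + \sum_{x\neq y} n_xn_y \sup_{B(x)\times B(y)} w_N .
\end{equation*}
Hence $E_{\text{QM},N}^{1,\omega}\lesssim N(\ln N)^{-2/d} + \frac{N^2}{m_N^{(1)}}\min\{\cdots\} + \frac{N^2}{(m_N^{(1)})^2}\sum_{x\ne y}\sup w_N$. Assumption (iii) makes the last term $\ll \frac{N^2}{m_N^{(1)}}\min\{\cdots\}$.

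Finally I compare the two bounds. Assumption (i) gives $N^2 w_{\min}\gg N(\ln N)^{-2/d}$, and assumption (ii) gives $N^2 w_{\min}\gg \frac{N^2}{m_N^{(1)}}\min\{\cdots\}$. The lower bound $\tfrac12 N^2 w_{\min}\widetilde c$ therefore eventually exceeds the upper bound, which is a contradiction. This shows that $n^{\varphi_N}/N\to0$ almost surely. The almost-sure quantifier follows because Lemma~\ref{Lemma almost sure volume of vacancy set} holds almost surely and the argument is deterministic for each typical $\omega$.

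The main obstacle is the lower bound step, that is, turning the positive two-particle occupation of $\varphi_N\otimes\varphi_N$ into interaction energy of size $w_{\min}$. This needs $w_N\ge w_{\min}$ on $\supp\varphi_N\times\supp\varphi_N$ and the monotonicity \eqref{inequatliy tr F G subspaces}, applied to the projection onto functions supported in $\supp\varphi_N\times\supp\varphi_N$, which contains $\varphi_N\otimes\varphi_N$.
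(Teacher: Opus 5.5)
Your proposal is correct and matches the paper's proof. The structure is the same: contradiction via Corollary~\ref{Lemma 4}; a lower bound from restricting $W_N^{(2)}$ to $L^2(\supp\varphi_N\times\supp\varphi_N)$, where your ``cleaner route'' is exactly the paper's $P_{\mathcal G}$ argument, whereas a Cauchy--Schwarz bound using only the diagonal element $\braketop{\varphi_N,\varphi_N}{W_N^{(2)}}{\varphi_N,\varphi_N}$ would not suffice; an upper bound from the impurity-free balls of Lemma~\ref{Lemma almost sure volume of vacancy set}; and a term-by-term comparison using (i)--(iii). The only difference is the trial state: the paper uses the product state with $(m_N^{(1)})^{-1/2}\sum_k\psi_N^{k,\omega}$ in every slot instead of your symmetrized fixed-occupation state, and both give the same three-term bound (in yours the exchange terms vanish by disjointness of supports).
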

\begin{proof}
We prove this theorem by contradiction. Suppose there exists a set $\widetilde \Omega \subseteq \Omega$ with $\mathds P(\widetilde \Omega) > 0$ such that for all $\omega \in \widetilde \Omega$, $\varphi_N$ is macroscopically occupied, that is,
\begin{equation}\label{EqAssump}
\limsup_{N \to \infty} \dfrac{n_{\Psi_N^{1,\omega}}^{\varphi_N}}{N} = \limsup_{N \to \infty} \tr_{\oneparticlehilbertspaceAbrev} \left( \projectionvarphiN \densitymatrix{1} \right) > 0 \ .
\end{equation}
 \textbf{Part I:} We first prove a lower bound to the $N$-particle ground state energy. Let $\omega \in \widetilde \Omega$ be given. Furthermore, we introduce the closed subspaces $\mathcal F \coloneq \mathrm{span}(\varphi_N \otimes \varphi_N)$ and $\mathcal G \coloneq L^2(\supp(\varphi_N) \times \supp(\varphi_N))\cap \jparticlehilbertspaceSymAbrev{2}$ of $\jparticlehilbertspaceSymAbrev{2}$; here, $\mathcal G \subset \jparticlehilbertspaceSymAbrev{2}$ in the sense that the corresponding functions are extended by zero in an obvious way.
 
 We obtain
\begin{align*}
E_{\text{QM},N}^{1,\omega} & = \Tr{N}_{\jparticlehilbertspaceSymAbrev{N}}\left(H_N^{\omega} \, \densitymatrix{N}\right) \\%\ge \Tr{N}_{\jparticlehilbertspaceAbrev{N}}\left( W_N^{(N)} \densitymatrix{N}\right) \\
& \ge \frac{N(N-1)}{2}  \Trtwo_{\jparticlehilbertspaceSymAbrev{2}} \left( W_N^{(2)} \densitymatrix{2} \right) \\ &\ge \frac{N(N-1)}{2}  \Trtwo_{\jparticlehilbertspaceAbrev{2}} \left( P_{\mathcal G} W_N^{(2)} P_{\mathcal G}  \densitymatrix{2} \right) \\
& \ge \frac{N(N-1)}{2}  \Trtwo_{\jparticlehilbertspaceAbrev{2}} \left( P_{\mathcal G} \densitymatrix{2} \right) \cdot \inf \Big\{w_N(x-y) : x,y \in  \supp(\varphi_N)\Big\} \ . 
\end{align*}
Regarding the second-to-last step, note that $P_{\mathcal G}+P_{\mathcal G^{\perp}}=\mathds{1}_{\jparticlehilbertspaceSymAbrev{2}}$, $W_N^{(2)} P_{\mathcal G} = P_{\mathcal G} W_N^{(2)} P_{\mathcal G}$, and $ \Trtwo_{\jparticlehilbertspaceAbrev{2}} ( W_N^{(2)} P_{\mathcal G^{\perp}}  \densitymatrix{2} ) =  \Trtwo_{\jparticlehilbertspaceAbrev{2}} ( P_{\mathcal G^{\perp}} W_N^{(2)} P_{\mathcal G^{\perp}}  \densitymatrix{2} ) \ge 0$ due to \eqref{trace positive operator density matrix is positive}. Regarding the last step, note that we similarly have $\Tr{2}_{\jparticlehilbertspaceAbrev{2}} (P_{\mathcal G}( W_N^{(2)} - \widetilde w_N^{\omega} \mathds{1}_{\jparticlehilbertspaceSymAbrev{2}}) P_{\mathcal G} \densitymatrix{2}) \ge 0$ with \eqref{trace positive operator density matrix is positive}; here $\widetilde w_N^{\omega} \coloneq \inf \{w_N(x-y) : x,y \in  \supp(\varphi_N)\}$ is such that $P_{\mathcal G} W_N^{(2)} P_{\mathcal G} \ge \widetilde w_N^{\omega} P_{\mathcal G}$.

Moreover, since $\mathcal F \subset \mathcal G$ (recall that $P_{\mathcal F}=\projectionvarphiNtwice$) and using \eqref{inequatliy tr F G subspaces}, we obtain 
\begin{align*}
    \Trtwo_{\jparticlehilbertspaceAbrev{2}} \left( P_{\mathcal G} \densitymatrix{2} \right) & \ge \Trtwo_{\jparticlehilbertspaceAbrev{2}} \left( \projectionvarphiNtwice \densitymatrix{2} \right) \ .
\end{align*}
Lastly, by assumption \eqref{EqAssump} and Corollary~\ref{Lemma 4}, one has $\limsup_{N \rightarrow \infty} \Trtwo_{\jparticlehilbertspaceAbrev{2}}( \projectionvarphiNtwice \densitymatrix{2})=c$ for some $c > 0$. In conclusion, there exists a $c>0$ such that
\begin{equation}\label{EqLowerBoundXXX}
    E_{\text{QM},N}^{1,\omega} \ge c \dfrac{N(N-1)}{2} \cdot \inf \Big\{w_N(x-y) :x,y \in  \supp(\varphi_N)\Big\}
\end{equation}
for infinitely many $N \in \mathds N$.

\textbf{Part II:} We now prove an upper bound to the $N$-particle ground state energy $E_{\text{QM},N}^{1,\omega}$ for any typical $\omega \in \Omega$ and all $N \in \mathds{N}$ using Lemma~\ref{Lemma almost sure volume of vacancy set}. More explicitly, in $\Lambda_N$ we place $m_N^{(1)}$ many balls with centers $G_{d,r,N}$ as described in the proof of Lemma~\ref{Lemma almost sure volume of vacancy set} and each with radius $(c_1 \ln N)^{1/d}$ (making these balls completely free of impurities and not only free of Poisson points, that is, free of centers of the impurities). 

In order to construct a suitable $N$-particle trial state, let $\psi_N^{k,\omega}$ denote the normalized ground state of the Dirichlet Laplacian with domain only in the $k$-th ball, $k \in \{1,2,\ldots, m_N^{(1)}\}$. Based on this we consider the symmetric $N$-particle state $\widetilde \Psi_N^{1,\omega} \in \jparticlehilbertspaceSymAbrev{N}$ given by
\begin{equation*}
 \widetilde \Psi_N^{1,\omega} \coloneq \Big| (m_N^{(1)})^{-1/2} \sum\limits_{k=1}^{m_N^{(1)}} \psi_N^{k,\omega}, \ldots, (m_N^{(1)})^{-1/2} \sum\limits_{k=1}^{m_N^{(1)}} \psi_N^{k,\omega} \Big\rangle \ .
\end{equation*}
Using this trial (product) state we obtain the estimate, with some constants $\tilde{c},C > 0$,  
\begin{equation}\label{EqUpperBoundXXX}\begin{split}
 E_{\text{QM},N}^{1,\omega} &\le \braketop{\widetilde \Psi_N^{1,\omega}}{H_N^{\omega}}{\widetilde \Psi_N^{1,\omega}}\\
& \le N \dfrac{\tilde c}{(c_1 \ln N)^{2/d}}  + \dfrac{N^2}{m_N^{(1)}} \min\{\|w_N\|_{\infty}, \|w_N\|_1 C^2  (\ln N)^{-1}\}\\
& + \, \dfrac{N^2}{(m_N^{(1)})^2} \sum\limits_{x,y \in G_{d,r,N}: x \neq y} \sup\{ w_N(\widetilde x- \widetilde y) : \widetilde x \in B_{(c_1 \ln N)^{1/d}}(x), \widetilde y \in B_{(c_1 \ln N)^{1/d}}(y) \} \ .
\end{split}
\end{equation}
Note here that the first term represents the total kinetic energy, the second term represents the energy due to the interaction of particles within the same ball, and the third term represents the energy due to interactions between particles in different balls.

The proof of the statement now follows from a suitable comparison of the derived upper and lower bound; of course, the upper bound has to be larger than the lower bound. However, assuming (i)--(iii), we see that the lower bound would eventually be larger than the upper bound. This can be seen by using \eqref{EqLowerBoundXXX} and \eqref{EqUpperBoundXXX} and comparing each term of the upper bound to the lower bound.
\end{proof}
We remark that conditions (i) and (ii) of Theorem~\ref{AbsenceofLocalizedBEC} can be understood as the requirement that the interaction is sufficiently strong within the support of $\varphi_N$. Condition (iii) of Theorem~\ref{AbsenceofLocalizedBEC}, on the other hand, can be seen as the requirement that the interaction strength is sufficiently weak at larger distances. However, while all three conditions can be fulfilled when the diameter of the support of $\varphi_N$ is relatively small, Theorem~\ref{AbsenceofLocalizedBEC} can also be used to prove absence whenever $\varphi_N$ is supported on multiple components as long as the diameter of each component is relatively small and the number of components remains bounded in the limit $N \to \infty$.
\subsection{Examples}

We now establish two examples that illustrate the meaning of Theorem~\ref{AbsenceofLocalizedBEC} and that will be important later on. Let $\nu > 0$ be sufficiently large so that we are in the non-percolation regime and suppose $\varphi_N \in \oneparticlehilbertspaceAbrev$ is a one-particle state that is supported on only one component of the vacancy set.

In \cite[Theorem~3.1]{kerner_pechmann_2023} it was shown that any component of the vacancy set (all of them are bounded since one is in the non-percolation regime) has diameter of at most $\sim \ln N$ (almost surely and for $N$ large enough). In other words, in this regime one has relatively good bounds on the localization of states supported on only one component of the vacancy set. Furthermore, let the two-particle interaction be given by
\begin{equation*}
    w_N(x):=\begin{cases} v_N\ , \qquad & \text{if } \|x\|_{\mathds R^d} < R_N\ ,\\
    0\ , \qquad & \text{otherwise}\ ,
    \end{cases}
\end{equation*}
for suitable $v_N,R_N$.
\begin{example}\label{ExampleAbsenceI} Choose $R_N=c_3 \ln N$ and $v_N = c_4 (\ln N)^{-d}$ for some constants $c_3,c_4 > 0$. Then $\|w_N\|_1 \sim 1$. Choosing $m^{(2)}_N=2R_N$, we see that condition (iii) in Theorem~\ref{AbsenceofLocalizedBEC} is trivially fulfilled since the right-hand side is zero (more explicitly, the corresponding balls have a distance of at least $m_N^{(2)}$ to each other and the potential $w_N$ has range equal to a half of $m_N^{(2)}$).

Furthermore, choosing $m^{(1)}_N=(\ln N)^\alpha$ with $\alpha>d-1$ as well as $c_3 > 0$ from above large enough, we conclude that conditions (i) and (ii) are also satisfied. Note here that choosing $c_3 > 0$ large enough ensures that the left-hand side in conditions (i) and (ii) equals $v_N$ due to the bounds on the sizes of the components of the vancancy set mentioned above. Therefore, $\mathds P$-almost surely $\varphi_N$ cannot be macroscopically occupied.
\end{example}
As will become clear later, a particularly important example is as follows (see Corollary~\ref{CorGenBEC}).
\begin{example}\label{ExampleAbsenceIII} Now, choose $R_N=c_3 \ln N$ and $v_N=c_4N^{-1}(\ln \ln N)^{-1}$ for some constants $c_3,c_4 > 0$. Then $\|w_N\|_{\infty} \ll \frac{1}{N}$ and $\|w_N\|_1 \sim \frac{(\ln N)^d}{N \ln \ln N}$. Furthermore, assume that $m^{(2)}_N=2R_N$ and $m^{(1)}_N=(\ln N)^\alpha$ with $\alpha > 0$. Then condition (iii) in Theorem~\ref{AbsenceofLocalizedBEC} is trivially fulfilled since the right-hand side is zero for the same reason as in the previous example.
Then, choosing $c_3 > 0$ large enough, we again conclude that conditions (i) and (ii) are satisfied. Therefore,  $\varphi_N \in \oneparticlehilbertspaceAbrev$ is $\mathds P$-almost surely not macroscopically occupied.
\end{example}
\begin{remark}
We remark that also in the percolation regime as well as in the KL-model with soft Poissonian obstacles, the ground state of the one-particle Dirichlet Laplacian is expected to be localized, see \cite{GHKPoisson}. 
\end{remark}
\section{Main results II: Generalized BEC and transition in the type of condensation}\label{SectionGeneralized}

\noindent The first aim of this section is to study generalized Bose--Einstein condensation into the family of (canonical) eigenstates of the Dirichlet Laplacian on $\Lambda^{\omega}_N$. The key point here is that, since the concept of generalized BEC refers to a somewhat weaker notion of BEC, it can be proved for stronger two-particle interactions. In other words, while for stronger interactions the ground-state of the Dirichlet Laplacian might not be macroscopically occupied -- for example, in the non-percolation regime as demonstrated in Theorem~\ref{AbsenceofLocalizedBEC} and Examples~\ref{ExampleAbsenceI} and \ref{ExampleAbsenceIII} -- generalized BEC might still occur. Here, let us recall that the notion of generalized Bose--Einstein condensation dates back to a paper of Girardeau~\cite{girardeau1960relationship}, see also \cite{SchultzBEC}. Informally speaking, generalized condensation refers to the macroscopic occupation of a family of states (typically consisting of states of an arbitrarily small energy window) as a \textit{whole} (see Theorem~\ref{GenBECWeak} for a precise formulation in our setting). It is therefore possible to have generalized BEC without any single state of the family being macroscopically occupied. On the other hand, however, if one considers the family of eigenstates of the one-particle Laplacian and if -- for example -- the one-particle ground state is macroscopically occupied (as in Theorem~\ref{TheoremMacOcc}), then this immediately implies generalized BEC. For more on generalized condensation we refer to \cite{van1982generalized,BLP} and references therein. To the best of our knowledge, there are only few results on generalized condensation in \textit{interacting} systems of bosons; in the non-random setting, let us refer to \cite{SutoGen} and in the random setting to \cite{kerner2019bose} where generalized BEC was established for interacting bosons in the Luttinger--Sy model. 

A main result in this context is the following statement, see also \cite{BKPSMFO}.
\begin{theorem}[Generalized BEC]\label{GenBECWeak} Assume the system is in the state $\N$, where $\Psi_N^{1,\omega}$ is a ground state of $H_{N}^{\omega}$ defined in  \eqref{N particle Hamiltonian}. Let $n_N^{j,\omega}$ be the occupation number of $\varphi_N^{j,\omega}$. Furthermore, let 
\begin{equation*}
\|w_N\|_{\infty} \ll \dfrac{1}{N} \quad \text{ or } \quad \|w_N\|_1 \ll \dfrac{\ln N}{N} \ .
\end{equation*}
Then, $\mathds P$-almost surely,
\begin{equation*}
\lim\limits_{\epsilon \searrow 0} \liminf_{N \to \infty} \dfrac{1}{N} \sum\limits_{j \in \mathds N : e_N^{j,\omega} \le \epsilon} n_N^{j,\omega} = 1 \ ,
\end{equation*}
that is, one has complete g-BEC (into the canonical eigenstates of the one-particle Dirichlet Laplacian on $\Lambda_N^{\omega}$).
\end{theorem}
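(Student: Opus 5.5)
The plan is to argue exactly as in the proof of Theorem~\ref{TheoremMacOcc}: compare an energy upper bound coming from a product trial state with a kinetic lower bound. This time, however, the lower bound separates the spectrum at a fixed threshold $\epsilon$ instead of at the spectral gap $e_N^{2,\omega}-e_N^{1,\omega}$.

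For the lower bound, fix $\epsilon>0$. Dropping the nonnegative interaction and using the spectral decomposition of $-\Delta$ in the canonical eigenbasis gives
\begin{equation*}
E_{\text{QM},N}^{1,\omega}\ \ge\ N\,\tr_{\oneparticlehilbertspaceAbrev}\big((-\Delta)\densitymatrix{1}\big)\ =\ \sum_{j} e_N^{j,\omega} n_N^{j,\omega}\ \ge\ \epsilon \sum_{j:\, e_N^{j,\omega}>\epsilon} n_N^{j,\omega}.
\end{equation*}
Since $\sum_j n_N^{j,\omega}=N$, this yields
\begin{equation*}
1-\frac1N\sum_{j:\,e_N^{j,\omega}\le\epsilon} n_N^{j,\omega}\ \le\ \frac{E_{\text{QM},N}^{1,\omega}}{N\epsilon}.
\end{equation*}
The trace manipulations with unbounded operators are justified as in the proof of Theorem~\ref{TheoremMacOcc}.

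It therefore suffices to show that, $\mathds P$-almost surely, $E_{\text{QM},N}^{1,\omega}/N\to 0$. Once this holds, the $\liminf$ above equals $1$ for every $\epsilon>0$, and the limit $\epsilon\searrow0$ follows trivially. For the upper bound I would use the product state $|\varphi_N^{1,\omega},\dots,\varphi_N^{1,\omega}\rangle$ exactly as in \eqref{UpperBoundEnergy}:
\begin{equation*}
\frac{E_{\text{QM},N}^{1,\omega}}{N}\ \le\ e_N^{1,\omega}+\frac{N}{2}\min\big\{\|w_N\|_\infty,\ C^2\|w_N\|_1 (e_N^{1,\omega})^{d/2}\big\}.
\end{equation*}
By \cite[Chapter 4, Theorem 4.6]{sznitman1998brownian}, almost surely $e_N^{1,\omega}(\ln N)^{2/d}\to c_0$. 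Hence $e_N^{1,\omega}\to0$ and $(e_N^{1,\omega})^{d/2}\le (2c_0)^{d/2}(\ln N)^{-1}$ for large $N$. Under either hypothesis the second term is then $o(1)$:
\begin{itemize}
\item if $\|w_N\|_\infty\ll N^{-1}$, then $N\|w_N\|_\infty\to0$ directly;
\item if $\|w_N\|_1\ll \ln N/N$, then $N\|w_N\|_1(\ln N)^{-1}\to0$.
\end{itemize}
This holds for almost every $\omega$, and the event $\{\mathcal K_N^\omega\neq\emptyset\}$ for large $N$ is also almost sure.

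I expect no serious obstacle. The only delicate points are that the conclusion is almost sure rather than in probability, and this works because, unlike in Theorem~\ref{TheoremMacOcc}, no spectral gap estimate (which is only available in probability) is needed; the almost sure Lifshitz-type asymptotics of $e_N^{1,\omega}$ suffice. One should also check that the $L^\infty$ bound $\|\varphi_N^{1,\omega}\|_\infty\le C(e_N^{1,\omega})^{d/4}$ from \cite[Lemma 1.1]{SznitmanKAC} holds with a deterministic constant. If a sharper trial state were wanted, one could spread particles over many impurity-free balls as in Part II of Theorem~\ref{AbsenceofLocalizedBEC}, but that is unnecessary here.
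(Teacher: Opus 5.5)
Your proof is correct and follows essentially the paper's argument: the same product-state upper bound \eqref{UpperBoundEnergy}, the kinetic lower bound $E_{\text{QM},N}^{1,\omega}\ge\sum_j e_N^{j,\omega}n_N^{j,\omega}$, and the almost-sure asymptotics $e_N^{1,\omega}\sim(\ln N)^{-2/d}$. The only difference is cosmetic: you cut the spectrum at a fixed $\epsilon$ and use $E_{\text{QM},N}^{1,\omega}/N\to0$, whereas the paper cuts at an $N$-dependent index $J_N^{\omega}$ (threshold $2\max\{N\gamma_N\min\{\cdots\},e_N^{1,\omega}\}\to0$) and subtracts $Ne_N^{1,\omega}$; this gives the quantitative bound $1/\gamma_N$ with a shrinking energy window, which is not needed for the stated limit.
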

\begin{proof} 
   First, recall the upper bound \eqref{UpperBoundEnergy}. Next, regarding a lower bound, we proceed in a similar fashion as in the proof of Theorem~\ref{TheoremMacOcc} to obtain
    \begin{align*}
   % \begin{split}
        E^{1,\omega}_{\text{QM},N} &= \Tr{N}_{\jparticlehilbertspaceSymAbrev{N}}\left(H_N^{\omega} \, \densitymatrix{N}\right) \geq \Tr{N}_{\jparticlehilbertspaceSymAbrev{N}}\left(\sum_{j=1}^{N}(-\Delta_j) \, \densitymatrix{N} \right)\\
        &= N \, \tr_{\oneparticlehilbertspaceAbrev}\left((-\Delta) \, \densitymatrix{1} \right) = N \, \tr_{\oneparticlehilbertspaceAbrev}\left( \sum\limits_{j=1}^{\infty} e_N^{j,\omega} \projectionvarphiNjomega{j} \, \densitymatrix{1} \right) \\
        &% \ge N \left( e_N^{1,\omega} \tr_{\oneparticlehilbertspaceAbrev} \left( \projectionvarphiNjomega{1} \densitymatrix{1} \right) + \tr_{\oneparticlehilbertspaceAbrev}\left(  e_N^{2,\omega} \sum\limits_{j=2}^{\infty} \projectionvarphiNjomega{j} \, \densitymatrix{1} \right) \right)
        \geq N \, \left( e^{1, \omega}_{N} \sum_{j=1}^{J_N^{\omega}-1} \frac{n^{j,\omega}_N}{N}+ e^{J_N^{\omega},\omega}_N \, \frac{N-\sum_{j=1}^{J_N^{\omega}-1}n^{j,\omega}_N}{N}\right) \ ,
    %\end{split}
    \end{align*}
    where
    \begin{equation*}
    J_N^{\omega} := \min \Bigg\{j \in \mathds N : e_N^{j,\omega} \ge 2 \max\Big\{ N \gamma_N \min \big\{ \|w_N\|_{\infty}, C^2 \|w_N\|_1 (e_N^{1,\omega})^{d/2} \big\}, e_N^{1,\omega} \Big\} \Bigg\}\,,
    \end{equation*}
    and $(\gamma_N)_{N \in \mathds{N}}$ is a positive sequence such that $\gamma_N \gg 1$ and
    \begin{equation}\label{EqXXX}
    N  \gamma_N \min \big\{ \|w_N\|_{\infty},C^2\|w_N\|_1 (\ln N)^{-1} \big\} \ll 1 \ .
    \end{equation}
Here, note that $e_N^{1,\omega} \sim (\ln N)^{-2/d}$ $\mathds P$-almost surely, see for example \cite[Chapter 4, Theorem 4.6]{sznitman1998brownian}.
    
    Now, combining our upper and lower bound, we get
 \begin{align*}
     N \min \Big\{ \|w_N\|_{\infty},  C^2 \|w_N\|_1 (e_N^{1,\omega})^{d/2} \Big\} & \geq %e^{1, \omega}_{N} \left( \sum_{j=1}^{J_N^{\omega} -1}\frac{n^{j,\omega}_N}{N}-1\right)+ e^{J_N^{\omega},\omega}_N \left(1-\sum_{j=1}^{J_N^{\omega} -1}\frac{n^{j,\omega}_N}{N}\right) = \\
     \left(1-\sum_{j=1}^{J_N^{\omega} -1} \frac{n^{j,\omega}_N}{N}\right)\left( e^{J_N^{\omega},\omega}_N -e^{1,\omega}_N \right) \ ,
     \end{align*}
    and therefore, with~\eqref{EqXXX}, 
     \begin{equation}\label{EQPROOF1}\begin{split}
   \left(1 - \frac{1}{N} \sum_{j=1}^{J_N^{\omega} -1} n^{j,\omega}_N \right) &\leq \frac{ N \min \Big\{ \|w_N\|_{\infty},  C^2 \|w_N\|_1 (e_N^{1,\omega})^{d/2} \Big\}}{e^{J_N^{\omega},\omega}_N - e^{1,\omega}_N} \le \frac{1}{\gamma_N} \ .
    \end{split}
    \end{equation}
    Now, the right-hand side of~\eqref{EQPROOF1} converges to zero, and $e_N^{J_N^{\omega}-1, \omega}$ converges $\mathds P$-almost surely to zero, due to the definition of $J_N^{\omega}$. Hence, $\mathds P$-almost surely,
    \begin{equation*}
    \lim_{\varepsilon \searrow 0}\liminf_{N \rightarrow \infty}\frac{1}{N}\sum_{j \in \mathds N: e^{j,\omega}_N \leq \varepsilon }n^{j,\omega}_N \geq \liminf_{N \rightarrow \infty}\frac{1}{N}\sum_{j=1}^{J_N^{\omega} - 1}n^{j,\omega}_N=1\ .
    \end{equation*}
\end{proof}
An interesting notion related to generalized Bose--Einstein condensation is that of a type-III BEC. We say that type-III BEC (into a family of eigenstates of the one-particle Laplacian) is present if one has generalized BEC as in Theorem~\ref{GenBECWeak} and if each individual eigenstate is itself \textit{not} macroscopically occupied. In addition, we say that BEC is of type I if only finitely many eigenstates are macroscopically occupied (of course, type-I BEC implies generalized BEC). 

This definition now yields an important application of Theorem~\ref{GenBECWeak}, obtained in connection with Example~\ref{ExampleAbsenceIII}.
\begin{corollary}[Type-III BEC]\label{CorGenBEC} Choose the intensity $\nu > 0$ of the Poisson process so large that one is in the non-percolation regime. Assume that the system is in the state $\N$, where $\Psi_N^{1,\omega}$ is a ground state of $H_{N}^{\omega}$ defined in  \eqref{N particle Hamiltonian}. Then, choosing the two-particle interaction $w_N:\mathds{R}^d \rightarrow \mathds{R}_+$ as in Example~\ref{ExampleAbsenceIII}, one has type-III Bose--Einstein condensation into the family of (canonical) eigenstates of the one-particle Dirichlet Laplacian. 
\end{corollary}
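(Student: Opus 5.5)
Type-III BEC has two parts: generalized BEC, and no macroscopic occupation of any single eigenstate. Both follow from results already established; each needs a short check.

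\textbf{Generalized BEC.} With $w_N$ as in Example~\ref{ExampleAbsenceIII} we have $\|w_N\|_\infty = v_N = c_4 N^{-1}(\ln\ln N)^{-1} \ll N^{-1}$. This is the hypothesis of Theorem~\ref{GenBECWeak}, which gives complete generalized BEC into the canonical eigenstates, $\mathds P$-almost surely.

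\textbf{No eigenstate is macroscopically occupied.} Fix $j\in\mathds N$. Since the eigenfunctions are canonical, $\varphi_N^{j,\omega}$ is supported in a single component of $\Lambda_N^\omega$. We are in the non-percolation regime, so \cite[Theorem~3.1]{kerner_pechmann_2023} gives, almost surely for $N$ large, that every component has diameter $\lesssim \ln N$. Choosing $c_3$ large, $|x-y|<R_N$ for all $x,y\in\supp(\varphi_N^{j,\omega})$, so the infimum in conditions (i) and (ii) of Theorem~\ref{AbsenceofLocalizedBEC} equals $v_N$. We then verify the conditions:
\begin{itemize}
\item[(i)] $v_N \gg N^{-1}(\ln N)^{-2/d}$ holds, because $\ln\ln N \ll (\ln N)^{2/d}$.
\item[(ii)] We need $v_N \gg (m_N^{(1)})^{-1}\min\{v_N, \|w_N\|_1(\ln N)^{-1}\}$. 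The right side is at most $v_N/m_N^{(1)}$, and $m_N^{(1)}=(\ln N)^\alpha\to\infty$, so this holds.
\item[(iii)] With $m_N^{(2)}=2R_N$, the balls centered at distinct points of $G_{d,r,N}$ are more than $R_N$ apart. Hence $w_N$ vanishes between them and the right side of (iii) is zero. The left side $\min\{v_N,\|w_N\|_1(\ln N)^{-1}\}$ is positive.
\end{itemize}
Theorem~\ref{AbsenceofLocalizedBEC} then yields $n_N^{j,\omega}/N\to 0$ almost surely for this $j$.

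\textbf{The Lemma~\ref{Lemma almost sure volume of vacancy set} hypotheses.} We must also check that $m_N^{(1)}=(\ln N)^\alpha$ and $m_N^{(2)}=2c_3\ln N$ satisfy the hypotheses of Lemma~\ref{Lemma almost sure volume of vacancy set}. Take $c_1$ small, so that $1-2C(d)c_1\nu>0$. Then $N^{1-2C(d)c_1\nu}(\ln N)^{-d}$ grows polynomially, which dominates every power of $\ln N$, so conditions (ii) and (iii) of the lemma hold.

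\textbf{Main obstacle: the theorem is stated for a deterministic family $\varphi_N$, while $\varphi_N^{j,\omega}$ depends on $\omega$.} The proof of Theorem~\ref{AbsenceofLocalizedBEC} is pathwise: for fixed typical $\omega$, Part~I's lower bound and Part~II's upper bound only use properties of $\supp(\varphi_N)$ and the almost-sure ball configuration. So I would rerun the argument for each fixed $\omega$ in the full-measure intersection of the events from \cite{kerner_pechmann_2023}, Lemma~\ref{Lemma almost sure volume of vacancy set}, and Theorem~\ref{GenBECWeak}. For such $\omega$:
\begin{itemize}
\item Suppose $\limsup_N n_N^{j,\omega}/N>0$ along some sequence of $N$.
\item Corollary~\ref{Lemma 4} gives a lower bound $E^{1,\omega}_{\mathrm{QM},N} \ge c N^2 v_N$ along that sequence.
\item The trial state gives $E^{1,\omega}_{\mathrm{QM},N} \lesssim N(\ln N)^{-2/d} + N^2 v_N/m_N^{(1)}$.
\end{itemize}
The upper bound is $o(N^2 v_N)$, a contradiction. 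Since the argument never uses the identity of $j$ beyond support size, it holds for every $j$ simultaneously. This gives type-III BEC.
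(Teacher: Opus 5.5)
Your proposal is correct and follows the paper's route. The corollary comes from combining two results: Theorem~\ref{GenBECWeak}, which applies because $\|w_N\|_\infty\ll N^{-1}$ and gives generalized BEC, and Theorem~\ref{AbsenceofLocalizedBEC}, applied through Example~\ref{ExampleAbsenceIII} to each canonical eigenstate. Your extra checks (the hypotheses of Lemma~\ref{Lemma almost sure volume of vacancy set} and the pathwise treatment of the $\omega$-dependent $\varphi_N^{j,\omega}$) only make explicit what the paper leaves implicit.
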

In words, Example~\ref{ExampleAbsenceIII} provides us with specific two-particle interactions for which the ground state of the (one-particle) Dirichlet Laplacian -- as well as any other canonical one-particle eigenstate -- is not macroscopically occupied in the non-percolation regime, but there is generalized BEC into the (canonical) eigenstates of the Dirichlet Laplacian. Consequently, this proves a transition from a type-I BEC (which exists in the non-interacting or weakly interacting regime as described by Theorem~\ref{TheoremMacOcc}(i)) to a type-III BEC, similar to what has been observed in \cite{kerner2019bose} for the random one-dimensional Luttinger--Sy model. 

\section{Final remarks}\label{SectionDiscussion}

\noindent We now compare the results obtained in this paper with previous results, see also \cite{BKPSMFO}. In \cite{boccato2024interacting} and similar to Theorem~\ref{TheoremMacOcc}, the authors proved a macroscopic occupation (with probability almost one) of a one-particle state, the minimizer of a Hartree-type functional, in the Kac--Luttinger model for certain interaction potentials including $w_N:\mathds{R}^d \rightarrow \mathds{R}_+$ of the form
\begin{equation}\label{OldModelPotential}
    w_N(x)= \frac{\kappa}{N(\ln N)^{2/d}}W(x) \ ,
\end{equation}
where $W \in (L^1 \cap L^{\infty})(\mathds{R}^d,\mathds{R}_+)$ is a suitable potential (satisfying some extra assumptions that we can neglect here) and $\kappa > 0$ is sufficiently small; note that $\| w_N\|_1 \sim \frac{1}{N(\ln N)^{2/d} }$. Theorem~\ref{GenBECWeak} now shows that generalized BEC is present, for example, for an interaction potential of the form
\begin{equation}\label{ModelPotential}
    w_N(x)=\frac{(\ln N)^{1-\epsilon}}{N}W(x) 
\end{equation}
with $W \in (L^1 \cap L^{\infty})(\mathds{R}^d,\mathds{R}_+)$ and any $\epsilon > 0$; here, one satisfies the $L^{1}$-condition in Theorem~\ref{GenBECWeak}. In other words, g-BEC can be proved for interactions with a higher power of $\ln N$.

Furthermore, in \cite{kerner_pechmann_2023} the authors investigate absence of macroscopic occupation of sufficiently localized states for interaction potentials of sufficient strength, comparable to what has been done in Theorem~\ref{AbsenceofLocalizedBEC}. While the setting discussed in \cite{kerner_pechmann_2023} is fairly similar to the one discussed here, it is important to note that \cite{kerner_pechmann_2023} assumes a positive temperature whereas we assume zero temperature, and their results are restricted to the non-percolation regime. For example, consider
\begin{equation*}
w_N(x) = c_NW(\|x\|_{\mathds{R}^d}) 
\end{equation*}
with $c_N > 0$ and a bounded continuous and non-negative $W \in L^1(\mathds{R}_+, x^{d-1}\mathrm{d}x)$; note that $\|w_N\|_1 \sim c_N$. For such a potential, it has been proved in \cite[Theorem~4.2]{kerner_pechmann_2023} that no state localized on only one component of the vacancy set (as an example, one could think of the ground state of the one-particle Dirichlet Laplacian) can be macroscopically occupied whenever
\begin{equation*}
    \dfrac{(\ln N)^{3}}{N} \ll c_N \ll \dfrac{1}{(\ln N)^{2}} \ .
\end{equation*}
Consequently, again comparing the potentials \eqref{OldModelPotential} and \eqref{ModelPotential}, we see that the extra powers of $\ln N$ are not negligible since a highly localized condensate can be excluded, at least when the temperature is non-zero, already for $c_N=(\ln N)^{3+\epsilon} / N$ with any $\epsilon > 0$. 

It is also worth comparing Theorem~\ref{GenBECWeak} with results obtained in \cite{kerner2019bose}. In this paper, the authors discuss interacting Bose gases in the one-dimensional Luttinger--Sy model, which can be considered the lower-dimensional analogue of the Kac--Luttinger model. To be more explicit, they consider contact interactions between the particles of the form
\begin{equation*}
    w_N(x)=g_N\delta(x) ,
\end{equation*}
where $g_N > 0$ describes the interaction strength; note that $g_N$ can be understood as the $L^1$-norm of $w_N$. In \cite[Theorem~3.1]{kerner2019bose} the authors then prove generalized condensation (into a family of Gross--Pitaevskii minimizers) for interactions strengths that satisfy
\begin{equation*}
g_N \ll \dfrac{1}{N^{\eta}(\ln N)^{2}}
\end{equation*}
 for an $0 < \eta \leq 1/3$. Most importantly, however, in \cite[Theorem~3.3]{kerner2019bose} a transition in the type of condensation is established: whereas for interaction strengths
 \begin{equation*}
 g_N \ll \dfrac{1}{N(\ln N)^{2}}
 \end{equation*}
 one almost surely has BEC of type I or type II (implying a macroscopic occupation of at least one one-particle state), BEC is almost surely of type III whenever
 \begin{equation*}
 \dfrac{1}{N(\ln N)} \ll g_N \ll \dfrac{1}{N^{\eta}(\ln N)^{2} }
 \end{equation*}
 with an $0 < \eta \leq 1/3$. Translating this to the higher-dimensional Kac--Luttinger model, and, in particular, recalling \eqref{Assumption interaction strength type I BEC KLM}, it was asked in \cite{BKPSMFO} whether there exists a $\gamma \leq 2/d$ such that type-III condensation occurs for interaction potentials $w_N$ considered in Theorem~\ref{GenBECWeak} that additionally satisfy
\begin{equation*}
\|w_N\|_1 \gg \dfrac{1}{ N(\ln N)^{\gamma}} \ .
\end{equation*}
This has now been answered in the positive by Corollary~\ref{CorGenBEC}.
\subsection*{Acknowledgement}{This research was
supported through the program \textit{Oberwolfach Research Fellows} by the Mathematisches Forschungsinstitut Oberwolfach in 2024 and by a \textit{Research in Residence} at the Centre International de Rencontres Mathématiques (CIRM) in 2026. We are grateful to both institutions and warmly thank for their hospitality.

C.B. acknowledges funding from the Italian Ministry of University and Research and Next Generation EU through the PRIN 2022 project
PRIN202223CBOCC\_01, project code 2022AKRC5P. C.B. also acknowledges
support of Grant PID2024-156184NB-I00 funded by MICIU/AEI/10.13039/
501100011033 and cofunded by the European Union. C.B. acknowledges also GNFM (Gruppo Nazionale per la Fisica Matematica) - INDAM.}

	\providecommand{\bysame}{\leavevmode\hbox to3em{\hrulefill}\thinspace}
\providecommand{\MR}{\relax\ifhmode\unskip\space\fi MR }
% \MRhref is called by the amsart/book/proc definition of \MR.
\providecommand{\MRhref}[2]{%
  \href{http://www.ams.org/mathscinet-getitem?mr=#1}{#2}
}
\providecommand{\href}[2]{#2}


\begin{thebibliography}{BKPS25}

\bibitem[BBCS19]{BBCS2019}
C.~Boccato, C.~Brennecke, S.~Cenatiempo, and B.~Schlein, \emph{{Bogoliubov theory in the {G}ross--{P}itaevskii limit}}, {Acta Math.} \textbf{222} ({2019}), 219--335.

\bibitem[BBCS20]{BBCSoptimalrate}
C.~Boccato, C.~Brennecke, S.~Cenatiempo, and B.~Schlein, \emph{{Optimal Rate for Bose–Einstein Condensation in the Gross–Pitaevskii Regime}}, Commun. Math. Phys. \textbf{376} (2020), no.~3, 1311--1395.

\bibitem[BKP24]{boccato2024interacting}
C.~Boccato, J.~Kerner, and M.~Pechmann, \emph{Interacting many-particle systems in the random {K}ac--{L}uttinger model and proof of {B}ose--{E}instein condensation}, J. Math. Pures Appl. \textbf{189} (2024), 103594.

\bibitem[BKPS25]{BKPSMFO}
C.~Boccato, J.~Kerner, M.~Pechmann, and W.~Spitzer, \emph{Generalized {B}ose--{E}instein condensation in the {K}ac-{L}uttinger model}, Oberwolfach Preprints \textbf{06} (2025).

\bibitem[BL82]{van1982generalized}
M.~van~den Berg and J.~T. Lewis, \emph{On generalized condensation in the free boson gas}, Physica A: Statistical Mechanics and its Applications \textbf{110} (1982), no.~3, 550--564.

\bibitem[BLP86]{BLP}
M.~van~den Berg, J.~T. Lewis, and J.~V. Pul\'{e}, \emph{A general theory of {B}ose--{E}instein condensation}, Helv. Phys. Acta \textbf{59} (1986), no.~8, 1271--1288.

\bibitem[BV25]{BaiVogel}
T.~Bai and Q.~Vogel, \emph{{Gibbs measures for the repulsive Bose gas}}, Annales de l'Institut Henri Poincaré, Probabilités et Statistiques \textbf{61} (2025), no.~3, 2149 -- 2183.

\bibitem[Ein24]{EinsteinBECI}
A.~Einstein, \emph{Quantentheorie des einatomigen idealen {G}ases}, Sitzber. {K}gl. {P}reuss. {A}kadm. {W}iss. (1924), 261--267.

\bibitem[Ein25]{EinsteinBECII}
\bysame, \emph{Quantentheorie des einatomigen idealen {G}ases. {Z}weite {A}bhandlung}, Sitzber. {K}gl. {P}reuss. {A}kadm. {W}iss. (1925), 3--14.

\bibitem[GHK07]{GHKPoisson}
F.~Germinet, P.~D. Hislop, and A.~Klein, \emph{Localization for {S}chrödinger operators with {P}oisson random potential}, J. Eur. Math. Soc. \textbf{9} (2007), 577--607.

\bibitem[Gir60]{girardeau1960relationship}
M.~Girardeau, \emph{Relationship between systems of impenetrable bosons and fermions in one dimension}, J. Math. Phys. \textbf{1} (1960), no.~6, 516--523.

\bibitem[JPZ10]{jaeck2010nature}
T.~Jaeck, J.~V. Pul{\'e}, and V.~Zagrebnov, \emph{On the nature of {B}ose--{E}instein condensation enhanced by localization}, J. Math. Phys. \textbf{51} (2010), no.~10, 103302.

\bibitem[KL73]{kac1973bose}
M.~Kac and J.~M. Luttinger, \emph{{{B}ose--{E}instein condensation in the presence of impurities}}, J.~Math.~Phys. \textbf{14} (1973), no.~11, 1626--1628.

\bibitem[KL74]{kac1974bose}
\bysame, \emph{{{B}ose--{E}instein condensation in the presence of impurities. II}}, J.~Math.~Phys. \textbf{15} (1974), no.~2, 183--186.

\bibitem[KP21]{kerner2021effect}
J.~Kerner and M.~Pechmann, \emph{On the effect of repulsive pair interactions on {B}ose--{E}instein condensation in the {L}uttinger--{S}y model}, Proceedings of the American Mathematical Society \textbf{149} (2021), no.~8, 3499--3513.

\bibitem[KP23]{kerner_pechmann_2023}
\bysame, \emph{Bose--{E}instein condensation for particles with repulsive short-range pair interactions in a {P}oisson random external potential in $\mathbb{R}^{d}$}, J. Appl. Probab. \textbf{60} (2023), no.~2, 382--393.

\bibitem[KPS19]{kerner2019bose}
J.~Kerner, M.~Pechmann, and W.~Spitzer, \emph{{B}ose--{E}instein condensation in the {L}uttinger--{S}y model with contact interaction}, Ann. Henri Poincaré, vol.~20, Springer, 2019, pp.~2101--2134.

\bibitem[KPS20]{KERNER2020287}
\bysame, \emph{On a condition for type-{I} {B}ose--{E}instein condensation in random potentials in d dimensions}, J.~Math.~Pures Appl. \textbf{143} (2020), 287--310.

\bibitem[LS73]{LuttSyEnergy}
J.~M. Luttinger and H.~K. Sy, \emph{{Low-lying energy spectrum of a one-dimensional disordered system}}, Phys. Rev. A \textbf{7} (1973), 701--712.

\bibitem[LS02]{LiebSeiringer2002}
E.~H. Lieb and R.~Seiringer, \emph{{Proof of {B}ose--{E}instein condensation for dilute trapped gases}}, {Phys. Rev. Lett.} \textbf{88} ({2002}), 170409.

\bibitem[LS10]{LSBook}
\bysame, \emph{The stability of matter in quantum mechanics}, Cambridge University Press, 2010.

\bibitem[LVZ03]{ZagLauVer}
J.~Lauwers, A.~Verbeure, and V.~A. Zagrebnov, \emph{Proof of {B}ose–{E}instein condensation for interacting gases with a one-particle spectral gap}, Journal of Physics A: Mathematical and General \textbf{36} (2003), no.~11, L169.

\bibitem[LZ07]{LenZag}
O.~Lenoble and V.~A. Zagrebnov, \emph{Bose--{E}instein condensation in the {L}uttinger--{S}y model}, Markov Process. Related Fields \textbf{13} (2007), no.~2, 441--468.

\bibitem[Mic07]{M07}
A.~Michelangeli, \emph{{Reduced density matrices and {B}ose--{E}instein condensation}}, SISSA \textbf{39} (2007).

\bibitem[MR96]{meester1996continuum}
R.~Meester and R.~Roy, \emph{Continuum percolation}, vol. 119, Cambridge University Press, 1996.

\bibitem[Sch63]{SchultzBEC}
T.~D. Schultz, \emph{Note on the one-dimensional gas of impenetrable point-particle bosons}, J. Mathematical Phys. \textbf{4} (1963), 666--671.

\bibitem[Sch22]{SchleinRev}
B.~Schlein, \emph{{Bose gases in the Gross–Pitaevskii limit: A survey of some rigorous results, in "The Physics and Mathematics of Elliott Lieb"}}, vol.~II, EMS Press, 2022.

\bibitem[Sei14]{SeiRev}
R.~Seiringer, \emph{{Bose gases, Bose–Einstein condensation, and the Bogoliubov approximation}}, J. Math. Phys. \textbf{55} (2014), 075209.

\bibitem[SW16]{SeiringerWarzel}
R.~Seiringer and S.~Warzel, \emph{Decay of correlations and absence of superfluidity in the disordered {T}onks–{G}irardeau gas}, New Journal of Physics \textbf{18} (2016), no.~3, 035002.

\bibitem[Szn98]{sznitman1998brownian}
A.-S. Sznitman, \emph{Brownian motion, obstacles and random media}, Springer Science \& Business Media, 1998.

\bibitem[Szn23]{SznitmanKAC}
\bysame, \emph{On the spectral gap in the {K}ac--{L}uttinger model and {B}ose--{E}instein condensation}, Stoch. Process. Their Appl. (2023), 104197.

\bibitem[Sü04]{SutoGen}
A.~Sütő, \emph{Normal and generalized {B}ose condensation in traps: one dimensional examples}, J. Statist. Phys. \textbf{117} (2004), no.~1-2, 301--341.

\bibitem[Wei12]{weidmann2012linear}
J.~Weidmann, \emph{Linear operators in {H}ilbert spaces}, Springer Science \& Business Media, 2012.

\bibitem[Zag25]{ZagrebnovReviewBEC}
V.~Zagrebnov, \emph{{A century of the Bose-Einstein condensation concept and half a century of the JINR experiments for observation of condensate in the superfluid 4He (He II)}}, 2025, arXiv:2510.03378.

\end{thebibliography}
\end{document}